\newtheorem{theorem}{Theorem}[section]
\newtheorem{corollary}{Corollary}[section]
\newtheorem{proposition}{Proposition}[section]
\newtheorem{lemma}{Lemma}[section]
\newtheorem{definition}{Definition}[section]
\newtheorem{example}{Example}[section]
\newcommand{\bdes}{\begin{description}}
\newcommand{\edes}{\end{description}}
\newcommand{\bal}{\begin{align}}
\newcommand{\eal}{\end{align}}
\newcommand{\bnum}{\begin{enumerate}}
\newcommand{\enum}{\end{enumerate}}
\newcommand{\bit}{\begin{itemize}}
\newcommand{\eit}{\end{itemize}}
\newcommand{\bea}{\begin{eqnarray}}
\newcommand{\eea}{\end{eqnarray}}
\newcommand{\bsry}{\begin{subarray}}
\newcommand{\esry}{\end{subarray}}
\newcommand{\bca}{\begin{cases}}
\newcommand{\eca}{\end{cases}}
\newcommand{\bcen}{\begin{center}}
\newcommand{\ecen}{\end{center}}
\newcommand{\bbm}{\begin{bmatrix}}
\newcommand{\ebm}{\end{bmatrix}}
\newcommand{\bmx}{\begin{matrix}}
\newcommand{\emx}{\end{matrix}}
\newcommand{\bpm}{\begin{pmatrix}}
\newcommand{\epm}{\end{pmatrix}}
\newcommand{\btab}{\begin{tabular}}
\newcommand{\etab}{\end{tabular}}
\newcommand{\B}{\mathcal{B}}
\newcommand{\ijm}{{i_1\cdots i_m j_1 \cdots j_m}}
\newcommand{\jim}{{j_1\cdots j_m i_1 \cdots i_m}}
\newcommand{\cH}{\mathcal{H}}
\newcommand{\bH}{\mathbb{H}}
\newcommand{\cpx}{\mathbb{C}}
\newcommand{\lmd}{\lambda}
\newcommand{\reff}[1]{(\ref{#1})}
\newcommand{\mc}[1]{\mathcal{#1}}
\DeclareMathOperator{\rank}{rank}
\newcommand{\be}{\begin{equation}}
\newcommand{\ee}{\end{equation}}
\newcommand{\baray}{\begin{array}}
\newcommand{\earay}{\end{array}}
\def\re{\mathbb{R}}
\def\cpx{\mathbb{C}}
\def\be{\beta}
\def\nn{\nonumber}
\begin{document}
\title{\bf  Hermitian tensor and quantum mixed state\thanks{This work is supported by the National Natural Science Foundation of China (No. 11871472).}}
\author{{Guyan Ni\thanks{Corresponding author. \newline {\it $\mathrm{\ \ \ \ }$
E-mail address}: guyan-ni@163.com(Guyan Ni)} %
 }\\
 {\small\it Department of Mathematics, National University of
Defense Technology,}\\ {\small\it Changsha, Hunan 410073, China.}\\
\\ \newline
\begin{tabular}{p{14.8cm}} \hline\\ {\small {\bf Abstract:}
An order $2m$ complex tensor $\cH$ is said to be Hermitian if \[\mathcal{H}_\ijm=\mathcal{H}_\jim ^*\mathrm{\ for\ all\ }\ijm .\] It can be regarded as an extension of Hermitian matrix to higher order.
A Hermitian tensor is also seen as a representation of a quantum mixed state. Motivated by the separability discrimination of quantum states, we investigate  properties of Hermitian tensors including: unitary similarity relation, partial traces, nonnegative Hermitian tensors, Hermitian eigenvalues, rank-one Hermitian decomposition and positive Hermitian decomposition, and their applications to quantum states.
\medskip%
$~~$\newline{\it Keywords:} Hermitian tensor; tensor decomposition; tensor eigenvalue;
quantum mixed state.
$~~$\newline{ MSC2010}: {15A18, 15A69, 46B28, 81P40}} \\
\hline
\end{tabular}
}
\date{}
\maketitle

\vskip 2mm

\section{Introduction}
A fundamental and also important problem in quantum physics is to detect whether a state is separable or not, especially for quantum mixed states. Around this problem, the paper investigates properties of Hermitian tensors and their application to quantum states.

An $m$th-order complex tensor denoted by $\mathcal{A}=(\mathcal{A}_{i_1...i_m})\in \mathbb{F}^{n_1\times\ldots\times n_m}$ $(\mathbb{F}=\mathbb{R}$ or $\mathbb{C})$ is a multi-array consisting of numbers $\mathcal{A}_{i_1...i_m}\in\mathbb{F}$ for all $i_k\in [n_k]$ and $k\in [m] $, where $[n]:=\{1, ... , n\}$. If $ \mathbb{F}=\mathbb{R}(\mathrm{or}\ \mathbb{C})$ then $\mathcal{A}$ is called a real (or complex) tensor \cite{QCC2018,QL2017}.  Tensor is the extension of matrix to higher order. As an extension of symmetric matrices, a tensor $\mathcal{S}=(\mathcal{S}_{i_1...i_m})\in \mathbb{F}^{n\times\cdot\cdot\cdot\times n}$ is called symmetric \cite{Comon2008} if its entries $\mathcal{S}_{i_1...i_m}$ are invariant under any permutation  operator $P$ of $\{1,..., m\}$, i.e.
\begin{equation}
	\mathcal{S}_{i_1\ldots i_m}=\mathcal{S}_{P[i_1...i_m]},
\end{equation}
where $P[i_1...i_m]:= [i_{P[1]}...i_{P[m]}]$.

Lots of study have been conducted regarding properties of tensors such as tensor eigenvalues \cite{qi05,L05,FanNZh2018}, the best rank-one approximation \cite{ZLQ12,NieW2014,CCW17}, tensor rank \cite{Comon2008,landsberg2010}, tensor and symmetric tensor decomposition  \cite{nie2017low,Brachat2010,nie2017generating}, symmetric tensor \cite{Chang2009,LuoQY2015}, nonnegative tensor \cite{HuHQ2014}, copositive tensors \cite{QiL2013}, and completely positive tensor \cite{QiXX2014,FanZh2017,ZhouFan2018}. Many tensor computation methods are also proposed including tensor eigenvalue computation \cite{QWW09,kdm11,CuiDN2014,HCD15,CHL2016,YYXSZ16}, tensor system solution \cite{LiNg2015,DingWei2016,HanL2017,LiXX2017,XieJW2018,DuQZC2018,LingYHQ2019}, and tensor decomposition \cite{FJL18}.

In the study of matrices, symmetric matrices and Hermitian matrices are playing significant roles \cite{Horn1990}. Similarly, Hermitian tensor as an extension of Hermitian matrices is defined as follows.

\begin{definition}
\label{def:hermtensor}\rm
A $2m$th-order tensor $\mathcal{H}=(\mathcal{H}_{i_1...i_m j_1...j_m})\in \mathbb{C}^{n_1\times\cdots\times n_m\times n_1\times\cdots \times n_m} $ is called a {\it Hermitian tensor} if $\mathcal{H}_{i_1...i_m j_1...j_m}=\mathcal{H}^*_{j_1...j_m i_1...i_m}$ for every $i_1, ..., i_m$ and  $j_1, ..., j_m$, where $x^*$ denotes the complex conjugate of $x$. A Hermitian tensor $\mathcal{H}$ is called a {\it symmetric Hermitian tensor} if $ n_1=\cdots=n_m $ and its entries $\mathcal{H}_{i_1...i_m j_1...j_m}$ are invariant under any permutation operator $P$ of $\{1,...,m\}$ with
$\mathcal{H}_{i_1...i_m j_1...j_m}=\mathcal{H}_{P[i_1...i_m] P[j_1...j_m]}.$
\end{definition}

In \cite[Definition 3.7]{JLZ2016}, Jiang et al. call a Hermitian tensor $\cH$ as a {\it conjugate partial-symmetric tensor} if $ n_1=\cdots=n_m $ and its entries $\mathcal{H}_{i_1...i_m j_1...j_m}$ are invariant under any permutation operators $P$ and $Q$ of $\{1,...,m\}$ with
$\mathcal{H}_{i_1...i_m j_1...j_m}=\mathcal{H}_{P[i_1...i_m] Q[j_1...j_m]}.$ Hence, a conjugate partial-symmetric tensor is a special symmetric Hermitian tensor.

The space of all Hermitian tensors $\mathcal{H} \in \mathbb{C}^{n_1\times\cdots\times n_m\times n_1\times\cdots \times n_m}$ is denoted by $\bH {[n_1, \ldots, n_m]}$ for convenience. Generally an order $2m$ tensor $\cH$ is called Hermitian if there is a permutation $P[1,2,\cdots,2m]=[p_1,...,p_m, q_1,...,q_m]$ such that such that
\begin{equation}
    \B_{i_1\ldots i_{2m}}=\cH_{i_{p_1}\ldots i_{p_m} j_{q_1}\ldots j_{q_m}}
\end{equation}
$\B$ is a Hermitian tensor defined in (\ref{def:hermtensor}). The general Hermitian tensor can be transformed into a usual Hermitian tensor by index permutation. It suffices to study the usual Hermitian tensor in (\ref{def:hermtensor}), so our paper is focusing on the Hermitian tensor defined in (\ref{def:hermtensor}).

Complex tensors and Hermitian tensors play important roles in quantum physics research.
 An $m$-partite pure state $|\psi\rangle$ of a composite quantum system can be regarded as a normalized element in a Hilbert tensor product space $\mathbb{C}^{n_1\times\cdots\times n_m}$. The pure state $|\psi\rangle$ is denoted as
$$
|\psi\rangle =\sum_{i_1,\cdots,i_m=1}^{n_1,\cdots,n_m}\chi_{i_1\cdots i_m}|e_{i_1}^{(1)}\cdots e_{i_m}^{(m)}\rangle,
$$
where $\chi_{i_1\cdots i_m}\in \mathbb{C}$, $\{|e^{(k)}_{i_k}\rangle: i_k=1,2,\cdots, n_k\}$ is an orthonormal basis of $\mathbb{C}^{n_k}$. Hence, a pure state is uniquely corresponding to a complex tensor $\chi=(\chi_{i_1\cdots i_m})$ under a given orthonormal basis \cite{NQB14}. Furthermore, one can obtain the geometric measure of a pure state by computing the U(US)-eigenvalues of its corresponding complex tensor \cite{HNZ16,QZN2017}. There are many results on the computation of U(US)-eigenvalue \cite{NB16,CQW17,NZZ2017,CQWZ18} and complex tensor research \cite{JLZ2016,hqz12,ZhFanW2018}.

Similarly, for a quantum mixed state $\rho$, its density matrix is always written as
$$\rho=\sum_{i=1}^k \lambda_i |\psi_i \rangle \langle \psi_i |,$$
where $\lambda_i>0$ and $\sum_{i=1}^k \lambda_i=1$, $|\psi_i \rangle$ is a pure state and $\langle\psi_i |$ is the complex conjugate transpose of $|\psi_i \rangle$. Hence, the density matrix of $\rho$ is also uniquely corresponding to a Hermitian tensor $\mathcal{H}\in \bH {[n_1, \ldots, n_m]} $ with
$$\mathcal{H}=\sum_{i=1}^k \lambda_i \chi^{(i)}\otimes \chi^{(i)*},$$
where $\chi^{(i)} $ is the corresponding complex tensor of the state $|\psi_i \rangle $.



Motivated by the separability discrimination of quantum states, we study properties of Hermitian tensors and their applications. The paper is organized as follows.
Section \ref{sc:sbp} introduce tensor operations and derive properties including unitary similarity relation, and invariant properties under unitary transformation. Section \ref{sc:PartTT} studies partial traces of Hermitian tensors. Nonnegativity and Hermitian Eigenvalues are discussed in Section \ref{sc:nhe}. Section \ref{sc:R1HTD} proposes the rank-one Hermitian decomposition, proves the existence of Hermitian decomposition for Hermitian tensors, and discusses properties of positive Hermitian tensors. Section \ref{sec:mixedstates} is the theoretical application to quantum mixed state.


\section{Tensor operation and properties}
\label{sc:sbp}

 Let $\mathcal{A}, \mathcal{B}\in \bH {[n_1, \ldots, n_m]}$ be Hermitian tensors. Their inner product is defined as
\begin{equation}\label{Eq:tensorinnerproduct}
    \langle\mathcal{A},\mathcal{B}\rangle:= \sum_{i_1,\cdots,i_m,j_1,\cdots,j_m=1}^{n_1,\cdots,n_m,n_1,\cdots,n_m} \mathcal{A}^*_{i_1 \cdots i_m j_1 \cdots j_m} \mathcal{B}_{i_1 \cdots i_m j_1 \cdots j_m},
\end{equation}
 the Frobinius norm of tensor $\mathcal{A}$ is defined as
\begin{eqnarray}
 \label{Eq:F-norm}  ||\mathcal{A}||_F &:=& \sqrt{\langle\mathcal{A},\mathcal{A}\rangle}
\end{eqnarray}
and the matrix trace of tensor $\mathcal{A}$ is defined as
\begin{equation}\label{Eq:matrixtrace}
    \mathrm{Tr}_M \mathcal{A}:= \sum_{i_1,\cdots,i_m=1}^{n_1,\cdots,n_m} \mathcal{A}_{i_1 \cdots i_m i_1 \cdots i_m}.
\end{equation}

For vectors  $u_1 \in \cpx^{n_1}, \ldots,
u_m \in \cpx^{n_m}$, the rank-$1$ Hermitian tensor is denoted by
\[
\otimes_{i=1}^mu_i\otimes_{j=1}^m u_j^* := \,
u_1 \otimes \cdots \otimes u_m \otimes
u_1^* \otimes \cdots \otimes u_m^*.
\]
So $\otimes_{i=1}^mu_i\otimes_{j=1}^m u_j^*$ is the tensor that
\begin{equation}\label{Eq:vectorproduct}
     (\otimes_{i=1}^mu_i\otimes_{j=1}^m u_j^*)_{i_1\cdots i_m j_1 \cdots j_m}:= (u_1)_{i_1}\cdots (u_m)_{i_m} (u_1)_{j_1}^*\cdots (u_m)_{j_m}^*.
\end{equation}

\begin{lemma}
\label{Th:proptenstimesvect}
For a Hermitian tensor $\mathcal{A}\in \bH {[n_1, \ldots, n_m]}$
and vectors $u_k\in \mathbb{C}^{n_k}$,
$k\in [m]$, we have:
\bit
\item [i)]
$\mathcal{A}_{i_1\cdots i_m i_1 \cdots i_m}$
and\ $ \langle \mathcal{A}, \otimes_{i=1}^mu_i\otimes_{j=1}^m u_j^* \rangle$
are real;

\item [ii)]
$\mathrm{Tr}_M (\otimes_{i=1}^mu_i\otimes_{j=1}^m u_j^*)
= \| u_1 \|^2 \cdots \| u_m \|^2.$

\eit
\end{lemma}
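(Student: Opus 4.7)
The plan is to treat each of the two parts directly from the definitions, using the Hermitian symmetry $\mathcal{A}_{\ijm} = \mathcal{A}^*_{\jim}$ to show reality in (i), and factorizing the multiple sum to prove (ii).

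For the first claim in (i), I would specialize the Hermitian symmetry to the diagonal case $j_k = i_k$ for all $k$, which forces $\mathcal{A}_{i_1\cdots i_m i_1\cdots i_m} = \mathcal{A}^*_{i_1\cdots i_m i_1\cdots i_m}$ and therefore the entry is real. For the second claim in (i), I would write out the inner product $\langle \mathcal{A}, \otimes_{i=1}^m u_i \otimes_{j=1}^m u_j^*\rangle$ using \eqref{Eq:tensorinnerproduct} and \eqref{Eq:vectorproduct} as
\[
\sum_{\ijm} \mathcal{A}^*_{\ijm}\, (u_1)_{i_1}\cdots (u_m)_{i_m}\, (u_1)^*_{j_1}\cdots (u_m)^*_{j_m},
\]
and then take the complex conjugate of the whole sum. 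After conjugation I would relabel the summation indices by swapping the roles of $i_k$ and $j_k$, and apply the Hermitian identity $\mathcal{A}_{\jim} = \mathcal{A}^*_{\ijm}$ to recover the original expression, concluding that it equals its own conjugate and is hence real.

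For (ii), I would expand $\mathrm{Tr}_M$ using \eqref{Eq:matrixtrace} together with \eqref{Eq:vectorproduct} to obtain
\[
\sum_{i_1,\ldots,i_m} (u_1)_{i_1}(u_1)^*_{i_1}\cdots (u_m)_{i_m}(u_m)^*_{i_m} = \sum_{i_1,\ldots,i_m}|(u_1)_{i_1}|^2\cdots |(u_m)_{i_m}|^2,
\]
and then separate the multiple sum into a product of single sums, each equal to $\|u_k\|^2$.

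I do not anticipate a real obstacle here; both parts are essentially index bookkeeping. The only point that needs a little care is the index relabeling in (i), where one must be explicit that swapping $(i_1,\ldots,i_m)$ with $(j_1,\ldots,j_m)$ is a legitimate change of summation variables before applying the Hermitian property, so that the conjugate of the inner product is seen to coincide with the inner product itself.
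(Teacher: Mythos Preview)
Your proposal is correct and follows essentially the same approach as the paper's own proof: both establish reality in (i) by showing the inner product equals its complex conjugate via the Hermitian identity and an index swap, and both prove (ii) by expanding the trace and factorizing the resulting multiple sum into a product of $\|u_k\|^2$. The only cosmetic difference is the order of operations in (i) --- the paper applies the Hermitian identity first and then recognizes the result as the conjugate, whereas you conjugate first and then apply the identity --- but the underlying argument is identical.
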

\begin{proof}
i) Since $\mathcal{A}$ is Hermitian,
$\mathcal{A}^*_{i_1\cdots i_m j_1\cdots j_m}
= \mathcal{A}_{j_1\cdots j_m i_1\cdots i_m }$. By (\ref{Eq:tensorinnerproduct}) and (\ref{Eq:vectorproduct}), we have that
\begin{eqnarray*}
 && \langle \mathcal{A}, \otimes_{i=1}^mu_i\otimes_{j=1}^m u_j^*\rangle \\
 && = \sum_{i_1,\cdots,i_m,j_1,\cdots,j_m=1}^{n_1,\cdots,n_m,n_1,\cdots,n_m} \mathcal{A}^*_{i_1 \cdots i_m j_1 \cdots j_m} (u_1)_{i_1}\cdots (u_m)_{i_m} (u_1)_{j_1}^*\cdots (u_m)_{j_m}^* \\
 && = \sum_{i_1,\cdots,i_m,j_1,\cdots,j_m=1}^{n_1,\cdots,n_m,n_1,\cdots,n_m} \mathcal{A}_{j_1 \cdots j_m i_1 \cdots i_m } (u_1)_{j_1}^*\cdots (u_m)_{j_m}^* (u_1)_{i_1}\cdots (u_m)_{i_m}\\
 && = \sum_{i_1,\cdots,i_m,j_1,\cdots,j_m=1}^{n_1,\cdots,n_m,n_1,\cdots,n_m} \mathcal{A}_{i_1 \cdots i_m j_1 \cdots j_m} (u_1)_{i_1}^*\cdots (u_m)_{i_m}^* (u_1)_{j_1}\cdots (u_m)_{j_m}\\
 && = \langle \mathcal{A}, \otimes_{i=1}^mu_i\otimes_{j=1}^m u_j^*\rangle^* .
\end{eqnarray*}
Hence, $\langle \mathcal{A}, \otimes_{i=1}^mu_i\otimes_{j=1}^m u_j^*\rangle $ is real, the first result follows.

ii) One can check that
$$
\mathrm{Tr}_M (\otimes_{i=1}^mu_i\otimes_{j=1}^m u_j^*)
=\sum_{i_1,\cdots,i_m=1}^{n_1,\cdots,n_m}
(u_1)_{i_1}\cdots (u_m)_{i_m} (u_1)_{i_1}^*\cdots (u_m)_{i_m}^*
$$
$$
= \left( \sum_{i_1=1}^{n_1}(u_1)_{i_1}(u_1)_{i_1}^*
\right) \cdots \left( \sum_{i_m=1}^{n_m}(u_m)_{i_m}(u_m)_{i_m}^*
\right) = \prod_{i=1}^m|| u_i||^2.
$$
This completes the proof.
\end{proof}

Let $Q\in \mathbb{C}^{n_k\times n_k}$ be a square matrix, $k=1,\cdots, m$. The mode-$k$ product 
of a tensor $\mathcal{A}$ by a matrix $Q$ is a $2m$th-order tensor, its entries are given by
\begin{equation}\label{Eq:tensormatrixproduct}
     (\mathcal{A}\times_k Q)_{i_1\cdots i_k \cdots i_{2m}}:= \sum_{t=1}^{n_i} \mathcal{A}_{i_1 \cdots  i_{k-1}t i_{k+1} \cdots i_{2m}} Q_{t i_k},
\end{equation}

\begin{definition}
\rm
Let $\mathcal{A} \in \bH {[n_1, \ldots, n_m]}$ be a Hermitian tensor.
The transformation $\mathcal{A }\rightarrow \mathcal{B}=\mathcal{A}\times_1 Q_1 \cdots \times_m Q_m \times_{m+1} Q_1^* \cdots \times_{2m} Q_m^*$
is called a {\it unitary transformation} if each $Q_k\in \mathbb{C}^{n_k\times n_k}$
is unitary. For such case,
$\mathcal{B}$ is said to be {\it unitary similar} to $\mathcal{A}$.
If all $Q_k$ may be taken to be real (and hence is real orthogonal), then the transformation is said to be (real) orthogonally transformation, $\mathcal{B}$ is said to be (real) orthogonally similar to $\mathcal{A}$.
\end{definition}
Unitary transformation is the extension of unitarily similarity of matrix. As we know, unitarily similar matrices share some common property such as eigenvalues and orthogonality. There are also some invariant properties under unitary transformation stated in (\ref{prop:pro of unitary}).

\begin{proposition}
\label{prop:pro of unitary}
Assume that $\mathcal{A}\in \bH {[n_1, \ldots, n_m]}$
be a Hermitian tensor, $Q_k\in \mathbb{C}^{n_k\times n_k}$
be unitary matrices for $k=1,\cdots, m$.
Let $\mathcal{B}=\mathcal{A}\times_1 Q_1 \cdots \times_m Q_m \times_{m+1} Q_1^* \cdots \times_{2m} Q_m^* $. Then

(i) $\mathcal{B}$ is also a Hermitian tensor;

(ii) $\mathrm{Tr}_M \mathcal{A}= \mathrm{Tr}_M \mathcal{B}$;

(iii) $||\mathcal{A}||_F=||\mathcal{B}||_F$.
\end{proposition}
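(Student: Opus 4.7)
The plan is to verify each of the three claims by writing $\mathcal{B}$ entrywise via the mode-product definition (2.7) and then collapsing sums using the unitarity relation $\sum_{i} (Q_k)_{s i}(Q_k^*)_{t i} = \delta_{st}$ (equivalently $Q_k Q_k^\dagger = I$) together with the Hermitian symmetry of $\mathcal{A}$. The routine part is the bookkeeping: the mode products with $Q_1,\ldots,Q_m$ act on the first block of $m$ indices, and the paired mode products with $Q_1^*,\ldots,Q_m^*$ act on the second block, so each time we collapse two of them we obtain a Kronecker delta identifying the matching $s$ and $t$ indices.

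For (i), I will write
\[
\mathcal{B}_{i_1\cdots i_m j_1\cdots j_m} = \sum_{s_1,\ldots,s_m,t_1,\ldots,t_m} \mathcal{A}_{s_1\cdots s_m t_1\cdots t_m}\, (Q_1)_{s_1 i_1}\cdots(Q_m)_{s_m i_m}\,(Q_1^*)_{t_1 j_1}\cdots(Q_m^*)_{t_m j_m},
\]
take the complex conjugate, swap the roles of the dummy indices $(s_k)\leftrightarrow(t_k)$, and use $\mathcal{A}_{t_1\cdots t_m s_1\cdots s_m}^* = \mathcal{A}_{s_1\cdots s_m t_1\cdots t_m}$ to recognize the resulting expression as $\mathcal{B}_{j_1\cdots j_m i_1\cdots i_m}$. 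Hence $\mathcal{B}_{i_1\cdots i_m j_1\cdots j_m}^* = \mathcal{B}_{j_1\cdots j_m i_1\cdots i_m}$, so $\mathcal{B}$ is Hermitian.

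For (ii), I substitute the same expansion into $\mathrm{Tr}_M\mathcal{B} = \sum_{i_1,\ldots,i_m} \mathcal{B}_{i_1\cdots i_m i_1\cdots i_m}$. For each $k$, pulling the sum over $i_k$ inside gives a factor $\sum_{i_k}(Q_k)_{s_k i_k}(Q_k^*)_{t_k i_k} = (Q_k Q_k^\dagger)_{s_k t_k} = \delta_{s_k t_k}$. Applying this for $k=1,\ldots,m$ collapses the expression to $\sum_{s_1,\ldots,s_m} \mathcal{A}_{s_1\cdots s_m s_1\cdots s_m} = \mathrm{Tr}_M\mathcal{A}$.

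For (iii), I compute $\|\mathcal{B}\|_F^2 = \langle \mathcal{B},\mathcal{B}\rangle$, expand both copies of $\mathcal{B}$ using the mode-product formula, and group the four $Q_k,Q_k^*$ factors attached to each outer index $i_k$ (respectively $j_k$). For each outer index, summing over it produces two unitarity collapses of the form $\sum_{i_k}(Q_k)_{s_k i_k}(Q_k)_{s_k' i_k}^* = \delta_{s_k s_k'}$ (and the analogous one for $j_k$), so after $2m$ such collapses only the diagonal terms $\mathcal{A}^*_{s_1\cdots s_m t_1\cdots t_m}\mathcal{A}_{s_1\cdots s_m t_1\cdots t_m}$ survive, giving $\|\mathcal{A}\|_F^2$. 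The only real obstacle is keeping the index gymnastics straight — making sure that each $Q_k$ is paired with its conjugate $Q_k^*$ and that conjugation of the second copy of $\mathcal{B}$ flips $(Q_k)$ to $(Q_k)^*$ and $(Q_k^*)$ to $(Q_k)$ so the Kronecker deltas emerge — but no new idea beyond unitarity is required.
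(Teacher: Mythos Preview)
Your proposal is correct and follows essentially the same approach as the paper: a direct entrywise expansion of $\mathcal{B}$ via the mode-product formula, then repeated use of the unitarity identity $\sum_i (Q_k)_{s i}(Q_k)^*_{t i}=\delta_{st}$ (and the Hermitian symmetry of $\mathcal{A}$ for part~(i)) to collapse the sums. The paper's proof is exactly this index computation, so there is nothing to add.
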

\begin{proof}
(i) Since $\mathcal{A}$ is a Hermitian tensor, then $ \mathcal{A}^*_{ k_1\cdots k_m t_1\cdots t_m} = \mathcal{A}_{t_1\cdots t_m k_1\cdots k_m}$.
\begin{eqnarray*}
 &&  B^*_{j_1\cdots j_m i_1\cdots i_m }\\
 &&  = \sum_{t_1,\cdots, t_m, k_1,\cdots, k_m=1}^{n_1,\cdots, n_m, n_1,\cdots, n_m} A^*_{ k_1\cdots k_m t_1\cdots t_m} (Q_1)^*_{k_1 j_1}\cdots (Q_m)^*_{k_m j_m} (Q_1)_{t_1 i_1}\cdots (Q_m)_{t_m i_m} \\
 &&  = \sum_{t_1,\cdots, t_m, k_1,\cdots, k_m=1}^{n_1,\cdots, n_m, n_1,\cdots, n_m} A_{t_1\cdots t_m k_1\cdots k_m} (Q_1)_{t_1 i_1}\cdots (Q_m)_{t_m i_m} (Q_1)^*_{k_1 j_1}\cdots (Q_m)^*_{k_m j_m} \\
 &&  = B_{i_1\cdots i_m j_1\cdots j_m}
\end{eqnarray*}
Hence, $\mathcal{B}$ is also a Hermitian tensor.

(ii) Assume that $Q\in \mathbb{C}^{n\times n}$ is a unitary matrix, then
$$
\sum_{i=1}^n (Q)_{t i} (Q)^*_{k i}=\left\{
                                     \begin{array}{ll}
                                       1, & \hbox{if } t=k; \\
                                       0, & \hbox{others. }
                                     \end{array}
                                  \right.
$$
\begin{eqnarray*}
&&  \mathrm{Tr}_M \mathcal{B} = \sum_{i_1, \cdots, i_m} \mathcal{B}_{i_1\cdots i_m i_1\cdots i_m }  \\
   && =\sum_{i_1, \cdots, i_m} \sum_{t_1,\cdots, t_m, k_1,\cdots, k_m} \mathcal{A}_{t_1\cdots t_m k_1\cdots k_m} (Q_1)_{t_1 i_1}\cdots (Q_m)_{t_m i_m} (Q_1)^*_{k_1 i_1}\cdots (Q_m)^*_{k_m i_m}\\
   && =\sum_{t_1,\cdots, t_m, k_1,\cdots, k_m} \mathcal{A}_{t_1\cdots t_m k_1\cdots k_m} 
   \sum_{i_1, \cdots, i_m} (Q_1)_{t_1 i_1}\cdots (Q_m)_{t_m i_m} (Q_1)^*_{k_1 i_1}\cdots (Q_m)^*_{k_m i_m}\\
   && =\sum_{t_1,\cdots, t_m, k_1,\cdots, k_m} \mathcal{A}_{t_1\cdots t_m k_1\cdots k_m} \sum_{i_1} (Q_1)_{t_1 i_1}(Q_1)^*_{k_1 i_1} \cdots \sum_{i_m}(Q_m)_{t_m i_m} (Q_m)^*_{k_m i_m}\\
   && =\sum_{t_1,\cdots, t_m} \mathcal{A}_{t_1\cdots t_m t_1\cdots t_m} = \mathrm{Tr}_M \mathcal{A}
\end{eqnarray*}

(iii)
\begin{eqnarray*}
  ||\mathcal{B}||^2_F &=& \sum_{i_1, \cdots, i_m,j_1,\cdots,j_m} \mathcal{B}_{i_1\cdots i_m j_1\cdots j_m } \mathcal{B}^*_{i_1\cdots i_m j_1\cdots j_m } \\
   &=& \sum_{t_1,\cdots, t_m, k_1,\cdots, k_m} \sum_{t^\prime_1,\cdots, t^\prime_m, k^\prime_1,\cdots, k^\prime_m} \mathcal{A}_{t_1\cdots t_m k_1\cdots k_m} \mathcal{A}^*_{t^\prime_1\cdots t^\prime_m k^\prime_1\cdots k^\prime_m}\\
  & & \cdot\sum_{i_1, \cdots, i_m} (Q_1)_{t_1 i_1}(Q_1)^*_{t^\prime_1 i_1}\cdots (Q_m)_{t_m i_m}(Q_m)^*_{t^\prime_m i_m}\\
 & &\cdot \sum_{j_1,\cdots,j_m}(Q_1)^*_{k_1 j_1}(Q_1)_{k^\prime_1 j_1}\cdots (Q_m)^*_{k_m j_m} (Q_m)_{k^\prime_m j_m}\\
   &=& \sum_{t_1,\cdots, t_m, k_1,\cdots, k_m} \sum_{t^\prime_1,\cdots, t^\prime_m, k^\prime_1,\cdots, k^\prime_m} \mathcal{A}_{t_1\cdots t_m k_1\cdots k_m} \mathcal{A}^*_{t^\prime_1\cdots t^\prime_m k^\prime_1\cdots k^\prime_m}
\end{eqnarray*}
\begin{eqnarray*}
  & & \cdot\sum_{i_1} (Q_1)_{t_1 i_1}(Q_1)^*_{t^\prime_1 i_1}\cdots \sum_{i_m}(Q_m)_{t_m i_m}(Q_m)^*_{t^\prime_m i_m}\\
 & &\cdot \sum_{j_1}(Q_1)^*_{k_1 j_1}(Q_1)_{k^\prime_1 j_1}\cdots \sum_{j_m}(Q_m)^*_{k_m j_m} (Q_m)_{k^\prime_m j_m}\\
 &=& \sum_{t_1,\cdots, t_m, k_1,\cdots, k_m}  \mathcal{A}_{t_1\cdots t_m k_1\cdots k_m} \mathcal{A}^*_{t_1\cdots t_m k_1\cdots k_m} = ||A||^2_F
\end{eqnarray*}

This completes the proof.
\end{proof}

Note: (1) An unitary transformation is a map of Hermitian tensors. However, two unitary similar Hermitian tensors can also be seen as the different representation of the same mixed state under different orthonormal bases.

(2) The matrix trace and the Frobinius norm are invariants of mixed states and Hermitian tensors under unitary transformation.

\section{Partial traces of Hermitian tensors}\label{sc:PartTT}
The concept of the partial trace is first proposed in the quantum mixed state \cite{hjw93} and it takes important role in the quantum information research. Following the same name, we define a partial trace of a Hermitian tensor and study its properties. Moreover, we use partial traces to investigate a sufficient and necessary condition for a complex tensor to be a rank-one tensor.

\begin{definition}\label{def:partialtrace}\rm
Let $\mathcal{H}\in \mathbb{H}[{n_1, \cdots, n_m}]$ be a Hermitian tensor. For $k\in \{1, \cdots, m\}$, define {\it the non-$k$ partial trace} of $\cH$, denoted by $ \mathrm{Tr}_{\bar{k}}(\cH),$
which is a $n_k\times n_k$ matrix with its entries
$$ (\mathrm{Tr}_{\bar{k}}(\cH))_{ij} = \sum_{i_1, \cdots, i_{k-1}, \ i_{k+1}, \cdots, i_m=1}^{n_1, \cdots, n_{k-1},\ n_{k+1}, \cdots, n_m} \mathcal{H}_{{i_1\cdots i_{k-1}\ i \ i_{k+1} \cdots i_m} {i_1\cdots i_{k-1}\ j \ i_{k+1} \cdots i_m}}. $$
Generally, for $I=\{k_1, \cdots, k_s\} $ with $1\leq k_1< \cdots< k_s\leq m$, {\it the non-$I$ partial trace} $ \mathrm{Tr}_{\bar{I}}(\cH)$ of $\cH$ is defined as a $(2s)$th-order Hermitian tensor $  \mathrm{Tr}_{\bar{I}}(\cH)\in \mathbb{H}[n_{k_1}, \cdots, n_{k_s} ]$. Its entries are defined as follows
$$
(\mathrm{Tr}_{\bar{I}}(\cH))_{i_{k_1}\cdots i_{k_s} j_{k_1}\cdots j_{k_s}}=\sum^{n_k}_{i_k=j_k=1,  k\in [m], k \not\in I } \mathcal{H}_{{i_1 i_2\cdots \cdots i_m} {j_1 j_2 \cdots j_m}}.
$$

\end{definition}

Let $\mathcal{A}\in \mathbb{C}^{n_1\times\cdots \times n_m}$ be a complex tensor. Let $\mathcal{A}^*$ be a conjugate tensor of $\mathcal{A}$. A {\it Hermitianlized tensor} of $\mathcal{A}$ is defined as $\rho(\mathcal{A}):=\mathcal{A}\otimes \mathcal{A}^*\in \bH {[n_1, \ldots,  n_m]}$ with its entries as
$$\rho(\mathcal{A})_{i_1\cdots i_m j_1\cdots j_m}:=\mathcal{A}_{i_1\cdots i_m} \mathcal{A}^*_{j_1\cdots j_m}.$$

Let $\rho=\rho(\mathcal{A})$. For each $k\in [m]$,  $ \mathrm{Tr}_{\bar{k}}(\rho)$ is an $n_k\times n_k$ matrix with its entries
$$ (\mathrm{Tr}_{\bar{k}}(\rho))_{ij} = \sum_{i_1, \cdots, i_{k-1}, \ i_{k+1}, \cdots, i_m=1}^{n_1, \cdots, n_{k-1},\ n_{k+1}, \cdots, n_m} \mathcal{A}_{i_1\cdots i_{k-1}\ i \ i_{k+1} \cdots i_m} \mathcal{A}^*_{i_1\cdots i_{k-1}\ j \ i_{k+1} \cdots i_m}. $$

Generally, for $I=\{k_1, \cdots, k_s\} $ with $1\leq k_1< \cdots< k_s\leq m$, {\it the non-$I$ partial trace} $ \mathrm{Tr}_{\bar{I}}(\rho)$ of $\rho$ is a $2s$th-order Hermitian tensor $  \mathrm{Tr}_{\bar{I}}(\rho)\in \mathbb{H}[n_{k_1}, \cdots, n_{k_s} ]$. By definition \ref{def:partialtrace}, its entries are as follows
$$
(\mathrm{Tr}_{\bar{I}}(\rho))_{i_{k_1}\cdots i_{k_s} j_{k_1}\cdots j_{k_s}}=\sum_{i_k=j_k \mathrm{\ for\ } k \not\in I } \mathcal{A}_{i_1 i_2\cdots \cdots i_m} \mathcal{A}^*_{j_1 j_2 \cdots j_m}.
$$

The following example is another way to understand the concept non-$k$ partial trace.
\begin{example}
	Let $e_i=(\underbrace{0, \cdots, 0}_{i-1}, 1, 0, \cdots, 0)^\top\in \mathbb{C}^n$ and $f_j=(\underbrace{0, \cdots, 0}_{j-1}, 1, 0, \cdots, 0)^\top\in \mathbb{C}^m$. A $2$nd-order tensor $\mathcal{A}$ is defined as
	$$
	\mathcal{A}=(a_{ij})_{n\times m}=\sum_{i,j=1}^{n, m} a_{ij} e_i \otimes f_j.
	$$
	\begin{eqnarray}
	\nonumber  \rho(\mathcal{A}) &=& \left(\sum_{i,j=1}^{n, m} a_{ij} e_i \otimes f_j\right)\otimes \left(\sum_{k,l=1}^{n, m} a_{kl} e_k \otimes f_l\right)^* \\
	\label{Eq:4odnontr}   &=& \sum_{i,j=1}^{n, m}\sum_{k,l=1}^{n, m} a_{ij}a^*_{kl} e_i \otimes f_j\otimes e_k \otimes f_l.
	\end{eqnarray}
	
	When $j=l$, the non-$1$ partial trace of $\rho$ is followed by (\ref{Eq:4odnontr}) as
	$$
	Tr_{\bar{1}}(\rho)= \sum_{i,k=1}^{n, n}\left(\sum_{j=1}^{ m} a_{ij}a^*_{kj}\right) e_i \otimes e_k=\left(\sum_{j=1}^{ m} a_{ij}a^*_{kj}\right)_{n\times n}.
	$$
	
\end{example}

The following is the Schmidt polar form, more detail seeing \cite{hjw93}.

\begin{theorem} {\bf(Schmidt polar form)}
	Let $ \mathcal{A} \in \mathbb{C}^{n_1\times n_2}$ be a $2$nd-order tensor. Let $\rho=\rho (\mathcal{A})$, and let $\rho_1=\mathrm{Tr}_{\bar{1}}(\rho)$ and $\rho_2=\mathrm{Tr}_{\bar{2}} (\rho)$ be the partial traces.  Then: %
	
	(1) $\rho_1$ and $\rho_2$ have the same nonzero eigenvalues $\lambda_1, \cdots, \lambda_r$ (with the same multiplicities) and any extra dimensions are made up with zero eigenvalues (noting then $r\leq \min (n_1, n_2)$);
	
	(2) $\mathcal{A}$ can be written as $\mathcal{A}=\sum_{i=1}^r \sqrt{\lambda_i} e_i f_i $, where $e_i $ (respectively $f_i $) are orthonormal eigenvectors of $\rho_1$ in $\mathbb{}C^{n_1}$ (respectively $\rho_2$ in $\mathbb{C}^{n_2}$) belonging to $\lambda_i$. This expression is called the Schmidt polar form of $\mathcal{A}$.
\end{theorem}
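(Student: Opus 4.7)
The plan is to reduce the theorem to the classical singular value decomposition (SVD), since a 2nd-order tensor $\mathcal{A}\in \mathbb{C}^{n_1\times n_2}$ is simply a matrix $A$ with entries $A_{ij}=a_{ij}$. The first step is to translate the partial traces $\rho_1,\rho_2$ into familiar matrix expressions. Starting from $\rho_{i_1 i_2 j_1 j_2}=a_{i_1 i_2}\overline{a_{j_1 j_2}}$ and applying Definition \ref{def:partialtrace}, a direct computation gives $(\rho_1)_{ij}=\sum_{k} a_{ik}\overline{a_{jk}}$ and $(\rho_2)_{ij}=\sum_{k} a_{ki}\overline{a_{kj}}$, i.e.\ $\rho_1=AA^H$ and $\rho_2=A^T\overline{A}=\overline{A^HA}$, both Hermitian positive semidefinite.

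Next, I would invoke the SVD $A=\sum_{i=1}^{r}\sigma_i u_i v_i^H$, where $r=\mathrm{rank}(A)\le\min(n_1,n_2)$, the singular values $\sigma_i>0$, and $\{u_i\}\subset\mathbb{C}^{n_1}$, $\{v_i\}\subset\mathbb{C}^{n_2}$ are orthonormal. Setting $\lambda_i=\sigma_i^2$, one obtains $\rho_1=\sum_i\lambda_i u_iu_i^H$, exhibiting the $\lambda_i$ as the nonzero eigenvalues of $\rho_1$ with eigenvectors $u_i$. Conjugating $A^HA=\sum_i\lambda_i v_iv_i^H$ entrywise gives $\rho_2=\sum_i\lambda_i\overline{v_i}\,\overline{v_i}^H$, so $\rho_2$ shares the same nonzero eigenvalues $\lambda_i$ (which are real) with eigenvectors $\overline{v_i}$; remaining dimensions are filled with zero eigenvalues, and the bound $r\le\min(n_1,n_2)$ is automatic. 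This settles (1).

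For (2), I would set $e_i=u_i$ and $f_i=\overline{v_i}$, which by the previous step are orthonormal eigenvectors of $\rho_1$ and $\rho_2$ respectively belonging to $\lambda_i$. Writing the SVD component-wise, $a_{jk}=\sum_i\sigma_i(u_i)_j\overline{(v_i)_k}=\sum_i\sqrt{\lambda_i}(e_i)_j(f_i)_k$, which is exactly the claimed expansion $\mathcal{A}=\sum_{i=1}^{r}\sqrt{\lambda_i}\,e_i f_i$. The main piece of bookkeeping to watch is the distinction between the entrywise conjugate $\mathcal{A}^*$ used in Hermitianlization and the matrix conjugate transpose $A^H$: this is precisely why $\rho_2=\overline{A^HA}$ rather than $A^HA$, and hence why the right-hand vectors must be $f_i=\overline{v_i}$ rather than $v_i$. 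Once this accounting is done the theorem is a direct translation of the SVD, and I do not anticipate any deeper obstacle.
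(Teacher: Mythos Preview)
Your proof is correct. The paper does not actually supply its own proof of this theorem; it only states the result and refers the reader to \cite{hjw93} for details. Your reduction to the singular value decomposition is the standard argument (and indeed the Schmidt polar form is essentially a restatement of the SVD), and your careful tracking of the entrywise conjugate in $\rho_2=\overline{A^HA}$, leading to $f_i=\overline{v_i}$, is exactly the point that needs attention in this translation.
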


However, if $ \mathcal{A}$ is a higher order tensor then the Schmidt polar form does not hold as the following example.

\begin{example}
	We consider a $3$rd-order tensor here. Let $$v(\alpha, \beta)=( \cos \alpha \sin \beta, \sin\alpha \sin \beta, \cos\beta)^T.$$ Let %
	$\mathcal{A}= 0.371391 v(\frac{\pi}{3},\frac{\pi}{3})\otimes v(\frac{\pi}{3}, \frac{5\pi}{6})\otimes v(\frac{-\pi}{6}, \frac{5\pi}{6})%
	+ 0.742781 v(\frac{\pi}{3}, \frac{5\pi}{6})\otimes v(\frac{\pi}{3}, \frac{\pi}{2})\otimes v(\frac{\pi}{3},\frac{\pi}{3})%
	+ 0.557086 v(\frac{\pi}{3},\frac{\pi}{3})\otimes v(\frac{-\pi}{6}, \frac{\pi}{2})\otimes v(\frac{\pi}{3}, \frac{5\pi}{6}),$
	 $\rho= \rho(\mathcal{A})$, $\rho_1=\mathrm{Tr}_{\bar{1}}(\rho)$, $\rho_2=\mathrm{Tr}_{\bar{2}}(\rho)$, and $\rho_3=\mathrm{Tr}_{\bar{3}}(\rho)$. We calculate their eigenvalues directly  \\
	\begin{center}\begin{tabular}{|c|l|}
			\hline
			partial traces & eigenvalues \\ \hline
			$\rho_1$ & $0.57901,\ \ 0.42099,\ \ \ \ 0\ \ $ \\
			$\rho_2$ & $0.624058,\ 0.339349,\ 0.0365928$ \\
			$\rho_3$ & $0.590626,\ 0.383293,\ 0.0260811$ \\
			\hline
	\end{tabular}\end{center}
	It is observed that $\rho_1$, $\rho_2$ and $\rho_3$ have different eigenvalues. Hence, the Schmidt polar form is false for the $3$rd-order tensor.   $\Box$
\end{example}

A decomposition $ \mathcal{A} = \sum_{i=1}^r \lambda_i u_i^{(1)}\otimes \cdots\otimes u_i^{(m)} $ with $0\not=\lambda_i\in \mathbb{R}$ is called orthogonal if $ u_1^{(k)}, \cdots, u_r^{(k)} \in \mathbb{C}^{n_k}$ are normalized and orthogonal for all $k=1, \cdots, m$.

\begin{theorem}\label{Th:main1}
	Assume that $ \mathcal{A} \in \mathbb{C}^{n_1\times\cdots\times n_m}$ is an $m$th-order tensor with an orthogonal decomposition
	$$
	\mathcal{A} = \sum_{i=1}^r \lambda_i u_i^{(1)}\otimes \cdots\otimes u_i^{(m)}.
	$$
	Let $\rho=\rho(\mathcal{A})$, $\rho_k=\mathrm{Tr}_{\bar{k}} (\rho)$. Then $\rho_1, \cdots, \rho_m$ have the same nonzero eigenvalues $\lambda_1^2, \cdots, \lambda_r^2$ (with the same multiplicities), and
	$$
	\rho_k=\sum_{i=1}^r \lambda_i^2 u_i^{(k)}\otimes  u_i^{(k)*}.
	$$
\end{theorem}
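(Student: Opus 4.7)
The plan is direct: expand $\rho$ as a double sum over the decomposition indices, then observe that taking the partial trace $\mathrm{Tr}_{\bar{k}}$ forces an inner product on every mode $\ell\neq k$, and these inner products collapse to Kronecker deltas by the assumed orthonormality.

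First I would substitute the orthogonal decomposition of $\mathcal{A}$ into $\rho(\mathcal{A})=\mathcal{A}\otimes\mathcal{A}^*$. Since $\lambda_i\in\mathbb{R}$, this gives the expansion
\[
\rho \;=\; \sum_{i,j=1}^{r} \lambda_i \lambda_j \,\bigl(u_i^{(1)}\otimes\cdots\otimes u_i^{(m)}\bigr)\otimes\bigl(u_j^{(1)*}\otimes\cdots\otimes u_j^{(m)*}\bigr).
\]

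Second, I would apply Definition 3.1 with $I=\{k\}$ term by term. For the summand indexed by $(i,j)$, the entry of the non-$k$ partial trace at position $(i_k,j_k)$ is
\[
(u_i^{(k)})_{i_k}\,(u_j^{(k)})^*_{j_k}\,\prod_{\ell\neq k}\Bigl(\sum_{t=1}^{n_\ell} (u_i^{(\ell)})_{t}(u_j^{(\ell)})^*_{t}\Bigr).
\]
Each factor in the product over $\ell\neq k$ is the inner product $\langle u_j^{(\ell)},u_i^{(\ell)}\rangle$ on $\mathbb{C}^{n_\ell}$, which equals $\delta_{ij}$ by the orthonormality of $\{u_1^{(\ell)},\ldots,u_r^{(\ell)}\}$. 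Hence only the diagonal terms $i=j$ survive the double sum, and reassembling yields
\[
\rho_k \;=\; \sum_{i=1}^{r} \lambda_i^{2}\, u_i^{(k)}\otimes u_i^{(k)*},
\]
which is the second assertion.

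Finally, to read off the eigenvalues I would note that the right-hand side above is already a spectral decomposition: each $u_i^{(k)}\otimes u_i^{(k)*}$ is the rank-one orthogonal projector onto $\mathrm{span}(u_i^{(k)})$, and $u_1^{(k)},\ldots,u_r^{(k)}$ are orthonormal by hypothesis. Therefore $\lambda_1^{2},\ldots,\lambda_r^{2}$ are eigenvalues of $\rho_k$ with eigenvectors $u_1^{(k)},\ldots,u_r^{(k)}$, the remaining $n_k-r$ eigenvalues are zero, and this list is independent of $k$, proving the first assertion. There is no substantive obstacle beyond careful index bookkeeping; the decisive point is that full orthogonality holds separately in each of the $m$ modes, which is exactly what makes the $m-1$ inner-product factors produce $\delta_{ij}$ simultaneously. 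Without orthogonality in every mode the cross terms $i\neq j$ would survive, and as the preceding third-order example indicates, the three matrices $\rho_1,\rho_2,\rho_3$ would generically not share their nonzero spectra.
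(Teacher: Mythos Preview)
Your proof is correct and follows essentially the same route as the paper: expand $\rho$ as a double sum, apply the non-$k$ partial trace termwise so that each mode $\ell\neq k$ contributes an inner-product factor equal to $\delta_{ij}$, and conclude. The only cosmetic difference is that the paper isolates the factorization $\mathrm{Tr}_M(\mathcal{U}\otimes\mathcal{V}^*)=\prod_t \mathrm{Tr}(u_t v_t^*)$ as a separate lemma before invoking it, whereas you carry out the same computation inline at the entry level; your final paragraph making the spectral reading explicit is a welcome addition that the paper leaves tacit.
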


\begin{lemma}\label{Lm:lm1}
	Assume that $u_i, v_i \in \mathbb{C}^{n_i}$, $i=1, \cdots, m$. Let $\mathcal{U}=u_1\otimes  \cdots\otimes u_m $, $\mathcal{V}=v_1\otimes  \cdots\otimes v_m $. Then
	$$
	\mathrm{Tr}_M (\mathcal{U}\otimes \mathcal{V}^*)=\mathrm{Tr} (u_1  v_1^*) \cdots \mathrm{Tr} (u_m  v_m^*).
	$$
\end{lemma}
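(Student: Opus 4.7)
The statement is essentially a direct computation, so the plan is straightforward and the ``hard part'' is really just making the factorization transparent.

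First I would unpack all the definitions: by the definition of outer product, the entries of $\mathcal{U}\otimes\mathcal{V}^*$ are
$(\mathcal{U}\otimes\mathcal{V}^*)_{i_1\cdots i_m j_1\cdots j_m} = (u_1)_{i_1}\cdots (u_m)_{i_m}(v_1)_{j_1}^*\cdots(v_m)_{j_m}^*$,
and the matrix trace $\mathrm{Tr}_M$ defined in \reff{Eq:matrixtrace} collects the diagonal terms $j_k = i_k$ for every $k$. Substituting these in gives a single sum
$$\mathrm{Tr}_M(\mathcal{U}\otimes\mathcal{V}^*)=\sum_{i_1=1}^{n_1}\cdots\sum_{i_m=1}^{n_m}(u_1)_{i_1}(v_1)_{i_1}^*\cdots(u_m)_{i_m}(v_m)_{i_m}^*.$$

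The key observation is that the summand fully factors across the indices $i_1,\ldots,i_m$, because each pair $(u_k)_{i_k}(v_k)_{i_k}^*$ depends only on $i_k$. Hence the multi-index sum splits into a product of independent single-index sums
$$\mathrm{Tr}_M(\mathcal{U}\otimes\mathcal{V}^*)=\prod_{k=1}^m\Bigl(\sum_{i_k=1}^{n_k}(u_k)_{i_k}(v_k)_{i_k}^*\Bigr).$$
This is exactly the same factorization technique used in part (ii) of Lemma~\ref{Th:proptenstimesvect}, and indeed the present lemma specializes to that result when $v_k = u_k$.

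Finally I would identify each factor with a matrix trace. The rank-one matrix $u_k v_k^*$ has $(i,j)$-entry $(u_k)_i(v_k)_j^*$, so its diagonal entries are $(u_k)_i(v_k)_i^*$ and
$\mathrm{Tr}(u_k v_k^*)=\sum_{i_k}(u_k)_{i_k}(v_k)_{i_k}^*$. Plugging this into the product gives the claimed identity. There is no real obstacle: the only thing to watch is bookkeeping of conjugates (the star lands on $v_k$, not $u_k$) and being precise that $\mathrm{Tr}_M$ forces all index pairs to coincide, which is what enables the clean factorization.
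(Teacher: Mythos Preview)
Your proposal is correct and follows essentially the same approach as the paper: expand the entries of $\mathcal{U}\otimes\mathcal{V}^*$, apply the definition of $\mathrm{Tr}_M$ to set $j_k=i_k$, factor the resulting multi-index sum into a product of single-index sums, and identify each factor with $\mathrm{Tr}(u_k v_k^*)$. The paper's proof is slightly more terse but the steps are identical.
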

{\bf Proof.} Let $u_i= (u_{i1}, \cdots, u_{i n_i})^T$ and $v_i= (v_{i1}, \cdots, v_{i n_i})^T$,  $i=1, \cdots, m$.
Then
\begin{eqnarray*}
	\mathrm{Tr}_M (\mathcal{U}\otimes \mathcal{V}^*) &=& \sum_{k_1, \cdots, k_m=1}^{n_1, \cdots, n_m} u_{1k_1}\cdots u_{mk_m}v^*_{1k_1}\cdots v^*_{mk_m}  \\
	&=& \left(\sum_{k_1=1}^{n_1} u_{1k_1} v^*_{1k_1}\right) \cdots \left( \sum_{k_m=1}^{n_m}  u_{mk_m} v^*_{mk_m}  \right) \\
	&=& \mathrm{Tr} (u_1 v^*_1) \cdots \mathrm{Tr} (u_m v^*_m).
\end{eqnarray*}
This completes the proof.  $\Box$

{\bf Proof of Theorem \ref{Th:main1}.}  Since $\{ u_1^{(k)}, \cdots, u_r^{(k)}  \}\subset \mathbb{C}^{n_k}$ are normalized and orthogonal for $k=1, \cdots, m$, then
\begin{equation}\label{Eq:ptm1}
\mathrm{Tr}(u_i^{(k)}\otimes u_j^{(k)*})=\delta(i,j)=\left\{                                                                                \begin{array}{ll}
0, & \hbox{if } i\not=j; \\
1, & \hbox{else.}
\end{array} \right.
\end{equation}
By Lemma \ref{Lm:lm1} and (\ref{Eq:ptm1}), it is followed that
\begin{eqnarray*}
	\rho_k &=& \mathrm{Tr}_{\bar{k}} (\rho)= \sum_{i,j=1}^r \lambda_i \lambda_j  \mathrm{Tr}_{\bar{k}} \left(u_i^{(1)}\otimes  u_j^{(1)*}\otimes \cdots\otimes u_i^{(m)}\otimes  u_j^{(m)*}\right) \\
	&=& \sum_{i,j=1}^r \lambda_i \lambda_j\ \mathrm{Tr}_M\left(\prod_{t=1, t\not=k}^m u_i^{(t)}\otimes  u_j^{(t)*} \right) u_i^{(k)}\otimes  u_j^{(k)*} \\
	&=& \sum_{i,j=1}^r \lambda_i \lambda_j\ \left(\prod_{t=1, t\not=k}^m \mathrm{Tr}(u_i^{(t)}\otimes  u_j^{(t)*}) \right) u_i^{(k)}\otimes  u_j^{(k)*}\\
	&=& \sum_{i=1}^r \lambda_i^2 u_i^{(k)}\otimes u_i^{(k)*}
\end{eqnarray*}
This completes the proof.   $\Box$

\begin{theorem}\label{Th:rankone1}
	Let $ \mathcal{A} \in \mathbb{C}^{n_1\times\cdots\times n_m}$ be an normalized $m$th-order tensor.
	Let $\rho=\rho(\mathcal{A})$, $\rho_k=\mathrm{Tr}_{\bar{k}} (\rho)$. Then $\mathcal{A}$ is a rank-one tensor iff $\rho_1, \cdots, \rho_m$ have the same only one nonzero eigenvalue $\lambda=1$.
\end{theorem}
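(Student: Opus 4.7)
My plan is to split the biconditional into two directions. For the \emph{only if} direction, suppose $\mathcal{A}$ is rank-one and normalized, and write $\mathcal{A} = u_1 \otimes \cdots \otimes u_m$ with $\|u_k\| = 1$ for each $k$ (absorbing scalars and phases freely among the factors). This is trivially an orthogonal decomposition with $r = 1$ and $\lambda_1 = 1$, so Theorem \ref{Th:main1} gives $\rho_k = u_k \otimes u_k^*$ for every $k$, a rank-one positive semidefinite matrix whose only nonzero eigenvalue equals $1$. No further work is needed for this direction.

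The main content is the \emph{if} direction. The key observation is that $\rho_k = \mathrm{Tr}_{\bar{k}}(\rho(\mathcal{A}))$ coincides with $A_{(k)} A_{(k)}^*$, where $A_{(k)} \in \mathbb{C}^{n_k \times \prod_{j \neq k} n_j}$ is the mode-$k$ matricization of $\mathcal{A}$; this is immediate from the definitions of $\rho(\mathcal{A})$ and of the partial trace. Consequently $\rank(\rho_k)$ equals the mode-$k$ rank of $\mathcal{A}$. The hypothesis that $\rho_k$ has a single nonzero eigenvalue equal to $1$ therefore forces $\rho_k = u_k u_k^*$ with $\|u_k\| = 1$, and hence every mode-$k$ fiber of $\mathcal{A}$ is a scalar multiple of the fixed unit vector $u_k$.

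From here I would proceed by induction on $m$. Applying the mode-$1$ statement yields a factorization $\mathcal{A} = u_1 \otimes \mathcal{B}$ with $\mathcal{B} \in \mathbb{C}^{n_2 \times \cdots \times n_m}$ and $\|\mathcal{B}\|_F = \|\mathcal{A}\|_F / \|u_1\| = 1$. A direct index computation, using $\sum_{i_1} |(u_1)_{i_1}|^2 = 1$, then gives
\[
   \rho_k(\mathcal{A}) \;=\; \rho_{k-1}(\mathcal{B}) \qquad (k = 2,\dots,m),
\]
so $\mathcal{B}$ inherits the hypothesis in all of its modes. The inductive hypothesis applied to $\mathcal{B}$ yields $\mathcal{B} = u_2 \otimes \cdots \otimes u_m$, and therefore $\mathcal{A} = u_1 \otimes u_2 \otimes \cdots \otimes u_m$ is rank-one. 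The base case $m = 2$ reduces to the elementary fact that $AA^* = uu^*$ forces the matrix $A$ to be rank-one.

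The step most likely to require care is the identity $\rho_k(\mathcal{A}) = \rho_{k-1}(\mathcal{B})$: the manipulation itself is routine, but the index bookkeeping has to be carried out carefully, since the distinguished mode shifts from position $k$ in $\mathcal{A}$ to position $k{-}1$ in $\mathcal{B}$ after the factor $u_1$ is pulled off. Beyond this, the argument is essentially linear-algebraic and parallels the classical fact that a tensor with all mode-$k$ ranks equal to one must itself be rank-one.
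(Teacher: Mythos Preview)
Your proposal is correct and follows essentially the same route as the paper. The paper also handles the necessity via Theorem~\ref{Th:main1} with $r=1$, and for sufficiency it peels off one factor at a time: it invokes the Schmidt polar form (which is exactly your observation that $\rho_1 = A_{(1)}A_{(1)}^*$ being rank one forces $\mathcal{A}=u^{(1)}\otimes\tilde{\mathcal{U}}^{(1)}$), then verifies the identity $\rho_{k}(\mathcal{A}) = \mathrm{Tr}_{\bar 1}\bigl(\tilde{\mathcal{U}}^{(1)}\otimes\tilde{\mathcal{U}}^{(1)*}\bigr)$ and iterates---precisely your inductive step $\rho_k(\mathcal{A})=\rho_{k-1}(\mathcal{B})$, phrased iteratively rather than by induction.
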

{\bf Proof.} The necessity is followed by Theorem \ref{Th:main1} directly for $r=1$. Now we prove the sufficiency.  Since $\rho_1, \cdots, \rho_m$ have the same only one nonzero eigenvalue $\lambda=1$, then $\rho_1, \cdots, \rho_m$ are rank-one matrices and can be written as $\rho_k=  u^{(k)} u^{(k)*}$, $k=1, \cdots, m$.

Firstly, since $\rho_1$ has a singer nonzero eigenvalue $\lambda=1$, we see $\mathcal{A}$ as a $n_1\times (n_2\times\cdots\times n_m)$ matrix, by Schmidt polar form, there is a normalized tensor $\tilde{\mathcal{U}}^{(1)} \in \mathbb{C}^{n_2\times\cdots\times n_m}$, such that
$$ \mathcal{A} =  u^{(1)}\otimes \tilde{\mathcal{U}}^{(1)} . $$

It follows that
$$\rho_2= \mathrm{Tr}_{\bar{2}}(\rho)=  \mathrm{Tr}_{\bar{2}}(u^{(1)}\otimes \tilde{\mathcal{U}}^{(1)}\otimes u^{(1)*}\otimes \tilde{\mathcal{U}}^{(1)*})=  \mathrm{Tr}_{\bar{1}} ( \tilde{\mathcal{U}}^{(1)}\otimes \tilde{\mathcal{U}}^{(1)*}). $$
Secondly, since $\rho_2$ has a singer nonzero eigenvalue $\lambda=1$, we see $\tilde{\mathcal{U}}^{(1)}$ as a $n_2\times (n_3\times\cdots\times n_m)$ matrix, again by Schmidt polar form, there is a normalized state $\tilde{\mathcal{U}}_2  \in \mathbb{C}^{n_3\times\cdots\times n_m}$, such that
$$ \tilde{\mathcal{U}}^{(1)}  =  u^{(2)}\otimes \tilde{\mathcal{U}}^{(2)},\ \mathrm{and}\
\rho_3=  \mathrm{Tr}_{\bar{1}} ( \tilde{\mathcal{U}}^{(2)}\otimes \tilde{\mathcal{U}}^{(2)*}). $$
And so on, it follows that
$$\mathcal{A} =  u^{(1)} \cdots u^{(m)}. $$
This completes the proof. $\Box$

\begin{corollary}\label{Th:rankone2}
	Let $ \mathcal{A} \in \mathbb{C}^{n_1\times\cdots\times n_m}$ be a normalized $m$th-order tensor.
	Let $\rho=\rho(\mathcal{A})$, $\rho_k=\mathrm{Tr}_{\bar{k}} (\rho)$. Then $\mathcal{A}$ is a rank-one tensor iff $\mathrm{Det}(\rho_k-I_k)=0$, where $I_k$ is a unit matrix, $k=1, \cdots, m$.
\end{corollary}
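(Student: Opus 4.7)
The plan is to reduce this corollary to Theorem \ref{Th:rankone1} by showing that under the normalization hypothesis, the single determinant condition $\mathrm{Det}(\rho_k-I_k)=0$ is equivalent to $\rho_k$ having a unique nonzero eigenvalue equal to $1$.

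First I would record three structural facts about each $\rho_k=\mathrm{Tr}_{\bar{k}}(\rho)$ that follow from the definition of $\rho=\mathcal{A}\otimes\mathcal{A}^*$: (i) $\rho_k$ is an $n_k\times n_k$ Hermitian matrix; (ii) $\rho_k$ is positive semidefinite, since for any $x\in\mathbb{C}^{n_k}$ one can write $x^*\rho_k x$ as the squared Frobenius norm of the contraction of $\mathcal{A}$ with $x$ along mode $k$; and (iii) $\mathrm{Tr}(\rho_k)=\mathrm{Tr}_M(\rho)=\|\mathcal{A}\|_F^2=1$, using the normalization of $\mathcal{A}$ and the fact that summing over the remaining paired indices in $\rho_k$ recovers the full matrix trace of $\rho$.

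For the necessity direction, if $\mathcal{A}$ is rank-one, then writing $\mathcal{A}=u^{(1)}\otimes\cdots\otimes u^{(m)}$ with the $u^{(k)}$ normalized (which is possible by rescaling since $\|\mathcal{A}\|_F=1$), Theorem \ref{Th:rankone1} applied with $r=1$ and $\lambda_1=1$ gives $\rho_k=u^{(k)}\otimes u^{(k)*}$, whose only nonzero eigenvalue is $1$. Hence $1$ is an eigenvalue of every $\rho_k$, so $\mathrm{Det}(\rho_k-I_k)=0$ for all $k$.

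For the sufficiency direction, suppose $\mathrm{Det}(\rho_k-I_k)=0$ for every $k\in[m]$. This means $1$ is an eigenvalue of $\rho_k$. Combining this with facts (ii) and (iii): since $\rho_k$ is PSD its eigenvalues are nonnegative, and since they sum to $1$ while one of them is already $1$, all other eigenvalues must be $0$. Thus each $\rho_k$ has exactly one nonzero eigenvalue, equal to $1$. Theorem \ref{Th:rankone1} then yields that $\mathcal{A}$ is rank-one.

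The main obstacle, if any, is really just cleanly justifying the two ingredients that $\rho_k$ is PSD and has unit trace; both are direct from the construction $\rho=\mathcal{A}\otimes\mathcal{A}^*$ and the normalization $\|\mathcal{A}\|_F=1$, but they are what turn the seemingly weak condition ``$1$ is an eigenvalue of $\rho_k$'' into the full hypothesis of Theorem \ref{Th:rankone1}. Once these are in place, the corollary follows with no further computation.
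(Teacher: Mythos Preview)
Your proposal is correct and follows essentially the same approach as the paper: both arguments reduce to Theorem~\ref{Th:rankone1} by observing that $\rho_k$ is positive semidefinite with trace $1$, so that $1$ being an eigenvalue (equivalently, $\mathrm{Det}(\rho_k-I_k)=0$) forces $1$ to be the unique nonzero eigenvalue. Your write-up is somewhat more explicit in justifying the PSD and unit-trace properties, but the logical structure is identical.
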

{\bf Proof.} Since $\mathcal{A}$ is normalized and all nonzero eigenvalues of $\rho_k$ are positive, hence the sum of all eigenvalues (including multiplicities) of $\rho_k$ equals 1. It implies that if 1 is a nonzero eigenvalue of $\rho_k$ then 1 is the single nonzero eigenvalue of $\rho_k$. By Theorem \ref{Th:rankone1}, it follows that $\mathcal{A}$ is a rank-one tensor iff 1 is a nonzero eigenvalue of $\rho_k$. By matrix theory, we know that 1 is an eigenvalue of $\rho_k$ iff $\mathrm{Det}(\rho_k-I_k)=0$. Hence, the result hold.  $\Box$

\begin{theorem}
	Let $ \mathcal{A}, \mathcal{B} \in \mathbb{C}^{n_1\times\cdots\times n_m}$ be $m$th-order tensors. Let $\rho^A=\rho(\mathcal{A})$, $\rho_k^A=\mathrm{Tr}_{\bar{k}} (\rho^A)$, $\rho^B=\rho(\mathcal{B})$, $\rho_k^B=\mathrm{Tr}_{\bar{k}} (\rho^B)$. Then $\rho_k^A$ and $\rho_k^B$ are unitary similar for $k=1, \cdots, m$ if $\mathcal{A}$ and $\mathcal{B}$ are unitary similar.
\end{theorem}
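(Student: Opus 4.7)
The plan is to lift the unitary similarity hypothesis from $\mathcal{A},\mathcal{B}$ up to their Hermitianizations $\rho^A=\mathcal{A}\otimes\mathcal{A}^*$ and $\rho^B=\mathcal{B}\otimes\mathcal{B}^*$, and then collapse the non-$k$ indices using unitarity. The hypothesis should be read in the natural $m$th-order sense: there exist unitary matrices $Q_k\in\mathbb{C}^{n_k\times n_k}$, $k=1,\ldots,m$, with $\mathcal{B}=\mathcal{A}\times_1 Q_1\cdots\times_m Q_m$. Entrywise conjugation gives $\mathcal{B}^*=\mathcal{A}^*\times_1 Q_1^*\cdots\times_m Q_m^*$, and taking the outer product shows at once that
\begin{equation*}
\rho^B = \rho^A \times_1 Q_1 \cdots \times_m Q_m \times_{m+1} Q_1^* \cdots \times_{2m} Q_m^*,
\end{equation*}
so $\rho^A$ and $\rho^B$ are already unitary similar as Hermitian tensors in the sense of Section \ref{sc:sbp}.

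Next I would substitute this expression into the formula for $(\rho_k^B)_{ij}$ from Definition \ref{def:partialtrace}. Writing out the non-$k$ partial trace introduces summations over two families of indices $t_1,\ldots,t_m$ and $s_1,\ldots,s_m$ (coming from $\mathcal{A}$ and $\mathcal{A}^*$ respectively) plus the $m-1$ free indices $i_l$, $l\neq k$, carrying factors $(Q_l)_{t_l i_l}(Q_l)^*_{s_l i_l}$. For every $l\neq k$ the unitarity identity
\begin{equation*}
\sum_{i_l=1}^{n_l}(Q_l)_{t_l i_l}(Q_l)^*_{s_l i_l}=\delta_{t_l s_l}
\end{equation*}
--- the same identity already used in the proof of Proposition \ref{prop:pro of unitary}(ii)--(iii) --- collapses those summations and forces $s_l=t_l$ for all $l\neq k$. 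Only the mode-$k$ factors $(Q_k)_{t_k i}$ and $(Q_k)^*_{s_k j}$ then survive, and the residual sum over the matched pairs $(t_l,s_l)_{l\neq k}$ reassembles into exactly $(\rho_k^A)_{t_k s_k}$. Renaming $r=t_k$ and $s=s_k$, this yields
\begin{equation*}
(\rho_k^B)_{ij}=\sum_{r,s}(Q_k)_{ri}\,(\rho_k^A)_{rs}\,(Q_k)^*_{sj}=\bigl(Q_k^{\top}\rho_k^A\,\overline{Q_k}\bigr)_{ij},
\end{equation*}
and since $Q_k^{\top}$ is unitary whenever $Q_k$ is, with $\overline{Q_k}=(Q_k^{\top})^{-1}$, this exhibits $\rho_k^A$ and $\rho_k^B$ as unitary similar matrices via the unitary matrix $Q_k^{\top}$.

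The argument is essentially combinatorial rather than analytic, and the main obstacle is purely bookkeeping: one must verify that the matched pairs $(t_l,s_l)_{l\neq k}$ surviving the unitarity collapse really do reassemble into the single-index partial trace $(\rho_k^A)_{rs}$, and that the two remaining mode-$k$ factors are oriented so that the resulting expression is a bona fide matrix conjugation $U(\cdot)U^{-1}$ with $U=Q_k^{\top}$, rather than some transposed or mirrored variant. No ideas beyond those already used in the proof of Proposition \ref{prop:pro of unitary} are required; this statement is its natural consequence after one further partial contraction.
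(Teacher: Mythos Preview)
Your proof is correct and follows essentially the same approach as the paper's: both lift the unitary similarity from $\mathcal{A},\mathcal{B}$ to $\rho^A,\rho^B$ and then use the orthonormality relation $\sum_{i_l}(Q_l)_{t_l i_l}(Q_l)^*_{s_l i_l}=\delta_{t_l s_l}$ in each non-$k$ mode to collapse those sums, leaving only a mode-$k$ conjugation. The paper packages this as a change of orthonormal basis $f_i^{(k)}=Q_k e_i^{(k)}$ rather than a direct index computation, but the content is identical, and your final relation $\rho_k^B=Q_k^{\top}\rho_k^A\,\overline{Q_k}$ is precisely the paper's $\rho_k^B=\rho_k^A\times_1 Q_k\times_2 Q_k^*$ written in matrix form.
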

{\bf Proof.} Assume that $\{e^{(k)}_{1}, \cdots, e^{(k)}_{n_k}\}$ is an orthonormal  basis of $\mathbb{C}^{n_k}$, $k=1, \cdots, m$, and
$$\rho^\mathcal{A}=\sum A_{i_1\cdots i_m}A^*_{j_1\cdots j_m} e^{(1)}_{i_1}\otimes \cdots \otimes e^{(m)}_{i_m}\otimes e^{(1)*}_{j_1}\otimes \cdots \otimes e^{(m)*}_{j_m}. $$
Then
$$(\rho_k^\mathcal{A})_{ij} = (\sum A_{i_1\cdots i_{k-1} i i_{k+1}\cdots i_m}A^*_{i_1\cdots i_{k-1} j i_{k+1}\cdots i_m})  e^{(k)}_{i} \otimes  e^{(k)*}_{j}.$$

Since $\mathcal{A}$ and $\mathcal{B}$ are unitary equivalent, hence there are unitary matrices $ Q_1, \cdots, Q_m$ such that $\rho^B=\rho^A\times_1 Q_1\cdots\times_m Q_m\times_{m+1} Q_1^* \cdots\times_{2m} Q_m^*$.
Let $f^{(k)}_{i}= Q_k e^{(k)}_{i}$, $i=1, \cdots, n_k$. Then $\{f^{(k)}_{1}, \cdots, f^{(k)}_{n_k}\}$ is another orthonormal basis of $\mathbb{C}^{n_k}$, and
$$ \rho^\mathcal{B}=\sum A_{i_1\cdots i_m}A^*_{j_1\cdots j_m} f^{(1)}_{i_1}\otimes \cdots \otimes f^{(m)}_{i_m}\otimes  f^{(1)*}_{j_1}\otimes \cdots \otimes f^{(m)*}_{j_m}. $$
Hence,
$$(\rho_k^\mathcal{B})_{ij} = (\sum A_{i_1\cdots i_{k-1} i i_{k+1}\cdots i_m}A^*_{i_1\cdots i_{k-1} j i_{k+1}\cdots i_m})  f^{(k)}_{i}\otimes   f^{(k)*}_{j}.$$

It follows that
$$ \rho_k^\mathcal{B}= \rho_k^\mathcal{B}\times_1 Q_k \times_2 Q^*_k. $$
Hence, $\rho_k^A$ and $\rho_k^B$ are unitary equivalent. This completes the proof.  $\Box$

This means that eigenvalues of partial traces are unchanged under unitary transformations.


\section{Nonnegativity and Hermitian Eigenvalues}
\label{sc:nhe}

For a Hermitian tensor $\cH \in \bH {[n_1, \ldots, n_m]}$,
recall that $\cH(x)$ is the conjugate polynomial
\[
\cH(x) = \langle \cH,  \otimes_{i=1}^m x_i \otimes_{j=1}^m  x_j^* \rangle,
\]
in $x := (x_1, \ldots, x_m)$, with complex variables
$x_1 \in \cpx^{n_1}, \ldots, x_m \in \cpx^{n_m}$.
Note that $\cH(x)$ always achieves real values and
$\cH(x)$ is Hermitian quadratic in each $x_i$.

\begin{definition}
\rm \label{def:nng}
A Hermitian tensor $\cH$ is called {\it nonnegative}
(resp., {\it positive}) if $\cH(x) \geq 0$ (resp., $\cH(x) > 0$)
for all $\| x_1 \|= \cdots = \| x_m \| =1$.
The set of all nonnegative Hermitian tensors is denoted as
\[
{\rm NN\mathbb{H}} {[n_1,\ldots,n_m]} :=
\left\{\cH \in \bH {[n_1, \ldots, n_m]}: \,
\cH(x) \geq 0, \, \forall \, x_i \in \cpx^{n_i}
\right \}.
\]
\end{definition}

\begin{proposition}
The set ${\rm NN\mathbb{H}} {[n_1,\ldots,n_m]}$ is a proper cone,
i.e., it is closed, convex, pointed and solid.
\end{proposition}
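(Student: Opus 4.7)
The plan is to verify the four defining properties of a proper cone --- closedness, convexity, pointedness, and solidness --- splitting them into two groups. Closedness, convexity, and the cone property follow immediately from the linearity of the evaluation map $\cH \mapsto \cH(x)$; pointedness and solidness each require a short additional argument.

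First I would note that $\bH[n_1,\ldots,n_m]$ is a real (not complex) vector space, since Hermiticity is preserved under real scalar multiplication and addition but not under multiplication by nonreal complex scalars. For each fixed tuple $x=(x_1,\ldots,x_m)$ with $x_k\in\cpx^{n_k}$, the evaluation $\cH\mapsto\cH(x)=\langle\cH,\otimes_{i=1}^m x_i\otimes_{j=1}^m x_j^*\rangle$ is $\mathbb{R}$-linear and continuous in $\cH$, and takes real values by Lemma~\ref{Th:proptenstimesvect}(i). Hence $\{\cH:\cH(x)\ge 0\}$ is a closed half-space of $\bH[n_1,\ldots,n_m]$ through the origin, and
\[
{\rm NN}\bH[n_1,\ldots,n_m]\;=\;\bigcap_{x}\{\cH:\cH(x)\ge 0\}
\]
realizes it as an intersection of such half-spaces. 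This intersection is automatically closed, convex, and closed under multiplication by nonnegative reals.

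For pointedness, I would show that $\cH\in{\rm NN}\bH\cap(-{\rm NN}\bH)$ forces $\cH=0$. Such a tensor satisfies $\cH(x)\equiv 0$ identically. Specializing $x_k=e_{i_k}^{(k)}$ to standard basis vectors isolates the single diagonal entry $\cH_{i_1\cdots i_m i_1\cdots i_m}$ and forces it to vanish. Substituting instead $x_k=e_{i_k}^{(k)}+\alpha_k e_{j_k}^{(k)}$ for varying parameters $\alpha_k\in\cpx$ turns $\cH(x)$ into a polynomial in the $\alpha_k$ and $\alpha_k^*$ whose coefficients are $\mathbb{R}$-linear combinations of the off-diagonal entries of $\cH$; a standard multilinear polarization then extracts each entry individually and forces it to be zero. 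This is the only step that is not entirely routine.

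For solidness, I would exhibit an explicit interior point. Define $\cI\in\bH[n_1,\ldots,n_m]$ by $\cI_{i_1\cdots i_m j_1\cdots j_m}:=\delta_{i_1 j_1}\cdots\delta_{i_m j_m}$; a direct computation gives $\cI(x)=\|x_1\|^2\cdots\|x_m\|^2$, which equals $1$ on the compact product $S$ of unit spheres. Since $\cH\mapsto\cH(x)$ is Frobenius-continuous and linear, a standard uniform estimate gives $|\cH(x)-\cI(x)|\le C\,\|\cH-\cI\|_F$ for all $x\in S$ with some constant $C$. Thus the Frobenius ball of radius $1/(2C)$ around $\cI$ consists entirely of tensors with $\cH(x)\ge 1/2>0$ on $S$, hence lies in ${\rm NN}\bH[n_1,\ldots,n_m]$. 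So $\cI$ is an interior point and the cone is solid, completing the proposed proof.
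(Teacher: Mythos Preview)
Your proof is correct and follows essentially the same approach as the paper. Both arguments establish closedness and convexity trivially, exhibit the identity tensor $\cI$ with $\cI(x)=\|x_1\|^2\cdots\|x_m\|^2$ as an interior point for solidness, and deduce pointedness from the fact that $\cH(x)\equiv 0$ forces a Hermitian tensor $\cH$ to vanish; you simply supply more detail at each step (the half-space intersection, the explicit $\varepsilon$-ball, and the polarization) where the paper is content with one-line assertions.
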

\begin{proof}
Clearly, ${\rm NN\mathbb{H}} {[n_1,\ldots,n_m]}$ is a closed, convex cone.
It is solid, i.e., it has an interior point.
For instance, the tensor $\mc{I}$ such that
\[
\mc{I}(x) = (x_1^*x_1)\cdots (x_m^*x_m)
\]
is an interior point, because the minimum value of $\mc{I}(x)$
over the spheres $\|x_1\| = \cdots = \| x_m \|=1$ is one.
The cone ${\rm NN\mathbb{H}} {[n_1,\ldots,n_m]}$ is also pointed.
This is because if $\cH \in {\rm NN\mathbb{H}} {[n_1,\ldots,n_m]}$ and
$-\cH  \in {\rm NN\mathbb{H}} {[n_1,\ldots,n_m]}$, then $\cH =0$.
This is because $\cH(x) \equiv 0$ on $\|x_1\| = \cdots = \| x_m \|=1$
and $\cH$ is Hermitian.
\end{proof}

The nonnegativity or positivity of a Hermitian tensor is related to its
Hermitian eigenvalues, which we define as follows.
Consider the optimization problem
\begin{equation}
\label{minH:||x||=1}
\min  \quad   \cH(x)  \quad
\mbox{s.t.}  \quad  x_1^*x_1 = 1, \ldots,  x_m^*x_m = 1.
\end{equation}
The first order optimality conditions for
\reff{minH:||x||=1} are
\begin{equation}
\nn
  \langle \cH, \otimes_{i=1,i\not=k}^m x_i
   \otimes_{j=1}^m x_j^*\rangle  = \lambda_k x_k^*,
\end{equation}
\begin{equation}
\nn
  \langle \cH, \otimes_{i=1}^m x_i \otimes_{j=1,j\not=k}^m x_j^*  \rangle
  = \lambda_k x_k,
\end{equation}
where $\lmd_k$ is the Lagrange multiplier for
$x_k^*x_k=1$, for $k=1,\ldots, m$. Because of the
constraints $x_k^*x_k=1$,
one can show that all Lagrange multipliers are equal.
So we can write them as
\begin{equation}
\label{Eq:HermitEiegnvalue01}
  \langle \cH, \otimes_{i=1,i\not=k}^m x_i
\otimes_{j=1}^m x_j^*\rangle = \lambda  x_k^*,
\end{equation}
\begin{equation}
\label{Eq:HermitEigenvalue02}
  \langle \cH, \otimes_{i=1}^m x_i \otimes_{j=1,j\not=k}^m x_j^*  \rangle
  = \lambda  x_k.
\end{equation}
Clearly, we can get
$$
\lambda = \langle \cH, \otimes_{i=1}^m x_i \otimes_{j=1}^m x_j^*  \rangle = \cH(x).
$$
Hence, by Theorem \ref{Th:proptenstimesvect},
we know that $\lambda$ must be real. Since,
$$
\langle \cH, \otimes_{i=1}^m x_i \otimes_{j=1,j\not=k}^m x_j^*  \rangle^*
 = \langle \cH, \otimes_{i=1,i\not=k}^m x_i \otimes_{j=1}^m x_j^*  \rangle,
$$
hence equations (\ref{Eq:HermitEiegnvalue01})
and (\ref{Eq:HermitEigenvalue02}) are equivalent.

\begin{definition}
\rm  \label{def:HMeig}
For a Hermitian tensor $\cH \in \bH {[n_1, \ldots, n_m]}$,
if a tuple $( \lambda; u_1, \cdots, u_m)$,
with each $\| u_i \| = 1$, satisfies (\ref{Eq:HermitEiegnvalue01})
or (\ref{Eq:HermitEigenvalue02}) for $k=1,\ldots,m$,
then $\lambda$ is called a {\it Hermitian eigenvalue},
and $(\lambda; u_1, \cdots, u_m)$
is called a {\it Hermitian eigentuple}.
In particular, $u_i$ is called the mode-$i$ {\it Hermitian eigenvector},
and $\otimes_{i=1}^mu_i\otimes_{j=1}^m u_j^*$
is called the {\it Hermitian eigentensor}.
\end{definition}

Clearly, the largest (resp., smallest) Hermitian eigenvalue of $\cH$
is the maximum (resp., minimum) value of
$\cH(x)$ over the multi-spheres $ \| x_i \|=1$.
Consequently,
A Hermitian tensor $\cH$ is nonnegative (resp., positive)
if and only if all its Hermitian eigenvalues are greater than
or equal to zero (resp., strictly bigger than zero).

\section{Hermitian decomposition }
\label{sc:R1HTD}

\begin{definition}
\rm
For a Hermitian tensor
$\mathcal{H}\in \bH {[n_1, \ldots, n_m]}$,
if it can be written as
\begin{equation} \label{Eq:rank-oneHermitDecomp}
\mathcal{H}=\sum_{i=1}^r \lmd_i \,  u_i^{(1)}\otimes \ldots \otimes u_i^{(m)} \otimes u_i^{(1)*}\otimes \ldots \otimes u_i^{(m)*}
\end{equation}
for $\lmd_i\in \mathbb{R}$, $u_i^{(j)}\in \mathbb{C}^{n_j}$
and $\|u_i^{(j)} \|=1$,
then $\mathcal{H}$ is called {\it Hermitian decomposable}.
In this case, \reff{Eq:rank-oneHermitDecomp}
is called a {\it Hermitian decomposition} of $\mathcal{H}$.
The smallest number $r$ in \reff{Eq:rank-oneHermitDecomp}
is called the {\it Hermitian rank} of $\mathcal{H}$,
which we denote as $\rank_H(\cH)$. If all $\lambda_i>0$, then (\ref{Eq:rank-oneHermitDecomp}) is called a {\it positive Hermitian decomposition} of $\mathcal{H}$, and $\cH$ is called {\it positive Hermitian decomposable}.
\end{definition}

It is well known that every tensor and symmetric tensor have their canonical decomposition and symmetric decomposition respectively. So a natural question is that whether every Hermitian tensor is Hermitian decomposable or not. Fortunately the answer is yes.

\begin{theorem}
Every Hermitian tensor
$\mathcal{H} \in \bH {[n_1, \ldots, n_m]}$
is Hermitian decomposable.
\end{theorem}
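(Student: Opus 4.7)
The plan is to reduce the statement to the real-linear-algebra fact that the real span $W$ of the rank-one Hermitian tensors $\otimes_{i=1}^m u_i \otimes_{j=1}^m u_j^{*}$ (as the $u_k$ range over $\mathbb{C}^{n_k}$) is the whole space $\bH[n_1,\ldots,n_m]$. Granted this, any $\mathcal{H}$ is a real combination of such rank-one tensors, and rescaling each vector factor to unit norm while absorbing the norms into the real coefficient yields the decomposition \reff{Eq:rank-oneHermitDecomp}.

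I would first observe that the tensor inner product \reff{Eq:tensorinnerproduct} is real-valued when restricted to $\bH[n_1,\ldots,n_m]$: index relabeling combined with the Hermitian symmetry (the same kind of computation used in Lemma \ref{Th:proptenstimesvect}(i)) shows $\langle \mathcal{A},\mathcal{B}\rangle^{*}=\langle \mathcal{A},\mathcal{B}\rangle$. Thus $\bH[n_1,\ldots,n_m]$ is a finite-dimensional real Hilbert space, and it suffices to verify $W^{\perp}=\{0\}$. Suppose $\mathcal{A}\in W^{\perp}$; dropping the unit-norm restriction by positive homogeneity in each $u_k$, the orthogonality condition becomes
\[
\sum_{I,J}\mathcal{A}^{*}_{IJ}\prod_{k=1}^{m}(u_k)_{i_k}\prod_{k=1}^{m}(u_k)^{*}_{j_k}=0 \qquad \text{for all } u_k\in\mathbb{C}^{n_k}.
\]

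The key step is to extract $\mathcal{A}$ from this identity. Writing $(u_k)_{i_k}=x_{k,i_k}+i\, y_{k,i_k}$ turns the left-hand side into a real polynomial in the independent real variables $x_{k,i_k},y_{k,i_k}$; equivalently, by the standard Wirtinger change of variables, one may treat $(u_k)_{i_k}$ and $(u_k)^{*}_{j_k}$ as independent complex variables. In these variables the expression is separately multilinear in the holomorphic family $\{(u_k)_{i_k}\}$ and the antiholomorphic family $\{(u_k)^{*}_{j_k}\}$, so the monomial $\prod_k(u_k)_{i_k}\prod_k(u_k)^{*}_{j_k}$ appears with the unique coefficient $\mathcal{A}^{*}_{IJ}$. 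Vanishing of the polynomial therefore forces $\mathcal{A}_{IJ}=0$ for every multi-index pair $(I,J)$, hence $\mathcal{A}=0$. This gives $W=\bH[n_1,\ldots,n_m]$ and the theorem follows.

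The only subtle point is this coefficient matching: justifying that vanishing of the polynomial on the ``diagonal'' where $(u_k)^{*}_{j_k}$ is literally the complex conjugate of $(u_k)_{j_k}$ forces all formal coefficients to vanish. This is a standard application of Wirtinger calculus (pass to real coordinates, match real-polynomial coefficients, then change back), but it is the step that must be stated most carefully. Attempting instead a more constructive route, via spectral decomposition of an unfolding followed by CP decompositions of the eigen-tensors, runs into the obstruction that cross terms $w\otimes v^{*}+v\otimes w^{*}$ cannot be polarized into $\otimes u\otimes u^{*}$ form while preserving rank-one structure, since $w\pm v$ is no longer a rank-one tensor; this is why the duality route is preferable.
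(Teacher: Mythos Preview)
Your proof is correct and takes a genuinely different route from the paper. The paper argues by dimension counting: it first writes down an explicit $\mathbb{R}$-basis $E_1\cup E_2\cup E_3$ of $\bH[n_1,\ldots,n_m]$ of cardinality $n^2$ (with $n=n_1\cdots n_m$), and then, using that each space of Hermitian matrices $\bH[n_k]$ already admits a rank-one basis of size $n_k^2$, tensors these matrix bases together to produce $n_1^2\cdots n_m^2=n^2$ linearly independent rank-one Hermitian tensors; equality of dimensions forces $W=\bH[n_1,\ldots,n_m]$. You instead exploit the real inner product to reduce the question to $W^{\perp}=\{0\}$ and then kill the orthogonal complement by a polynomial-identity/Wirtinger argument. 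The paper's route is more constructive---it hands you an explicit spanning set of rank-one Hermitian tensors, potentially useful for algorithms---while your duality argument is cleaner conceptually: no dimension bookkeeping, and it makes transparent that the rank-one tensors separate points of $\bH$. Both proofs rest on one standard outside fact: the paper needs that Hermitian matrices have a rank-one $\mathbb{R}$-basis, whereas you need that the monomials $z^{\alpha}\bar{z}^{\beta}$ are linearly independent as functions on $\mathbb{C}^{N}$, which is exactly the ``subtle point'' you flag and handle correctly.
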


\begin{proof}

It is clear that for each $\mathcal{A}, \mathcal{B}\in \bH {[n_1, \ldots, n_m]}$ and $a, b\in \mathbb{R}$, then $a \mathcal{A}+b \mathcal{B}\in \bH {[n_1, \ldots, n_m]}$, which means that $\bH {[n_1, \ldots, n_m]}$ is a linear space over $\mathbb{R}$. Denote $\bH {[n_1, \ldots, n_m]}_1$ as the set of all rank-1 Hermitian decomposable tensors in $\bH {[n_1, \ldots, n_m]}$. Then $\bH {[n_1, \ldots, n_m]}_1$ is also a linear space over $\mathbb{R}$, and it is a subspace of $\bH {[n_1, \ldots, n_m]}$. In the following, we will show that $\bH {[n_1, \ldots, n_m]}$ and $\mathbb{H}{[n_1,\cdots,n_m]}_1$ have the same dimension.

Denote $[n_1, \cdots, n_m] :=\{ (i_1, \cdots, i_m)| i_1\in [n_1], \cdots, i_m\in [n_m]  \}$. Denote $\mathcal{E}_{i_1\cdots i_m j_1\cdots j_m}$ as a $2m$th-order tensor with only one nonzero entry $(\mathcal{E}_{i_1\cdots i_m j_1\cdots j_m})_{i_1\cdots i_m j_1\cdots j_m}=1$. Denote $I :=(i_1, \cdots, i_m)$, $J :=(j_1, \cdots, j_m)$. Then $\mathcal{E}_{IJ}=\mathcal{E}_{i_1\cdots i_m j_1\cdots j_m}$. Define an order $ I < J $ if there is a number $k\in [m]$ such that $i_1=j_1, \cdots, i_{k-1}=j_{k-1}$ and $ i_k < j_k$.

On the one hand, let $E_1=\{ \mathcal{E}_{II}: I\in [n_1, \cdots, n_m] \}$,  $E_2= \{ \mathcal{E}_{I J} + \mathcal{E}_{J I } : I<J, I, J\in [n_1, \cdots, n_m] \}$, $ E_3= \{ \sqrt{-1} \mathcal{E}_{IJ} - \sqrt{-1} \mathcal{E}_{JI }:I, J\in [n_1, \cdots, n_m], I<J \}$. Then  $E_1\bigcup E_2\bigcup E_3$ is a basis of the linear space $\mathbb{H} {[n_1,\cdots, n_m]}$ over $\mathbb{R}$.
Since $\#E_1=N$, $\#E_2=\#E_3= \frac{N(N-1)}{2}$, where $\#$ denotes the number of entries of the set. Hence, The dimension of $\bH {[n_1, \ldots, n_m]}$ is $n^2$, where $n=n_1\times\cdots\times n_m$.

On the other hand, let $\{e_{i_k}^{(k)}\otimes e_{i_k}^{(k)*}\}_{i_k=1}^{D_k} $ is a basis of the linear space $\mathbb{H}[n_k]$  over $\mathbb{R}$, $D_k$ is the dimension. Then $D_k=n_k^2$. Let $$E=\{e_{i_1}^{(1)}\otimes \cdots\otimes e_{i_m}^{(m)}\otimes e_{i_1}^{(1)*}\otimes \cdots\otimes e_{i_m}^{(m)*} | i_k=1, \cdots, D_k, k=1, \cdots, m \}.$$  Then $E$ is a basis of the linear space $\bH {[n_1, \ldots, n_m]}_1$ over $\mathbb{R}$,
and its demission $\#E=n_1^2\times\cdots\times n_m^2=n^2$.

Hence, $\bH {[n_1, \ldots, n_m]}_1 = \bH {[n_1, \ldots, n_m]}$.
It follows that every Hermitian tensor is Hermitian decomposable.
\end{proof}

Let $n=n_1\times \cdots\times n_m$. Every Hermitian tensor $\mathcal{H}$
can be flattened as a Hermitian matrix $H \in \cpx^{ n\times n}$,
labeled in the way that
\begin{equation}\label{HIJ=Hij}
(H)_{I,J} = \cH_{i_1 \ldots i_m j_1 \ldots j_m}
\end{equation}
for $I:=(i_1,\ldots, i_m)$ and $J:=(j_1,\ldots, j_m)$.
For a tensor $\mathcal{U} \in \mathbb{C}^{n_1\times\cdots\times n_m}$,
$\mathcal{U}^*$ denotes the tensor obtained
by applying complex conjugates to its entries.
Note that $\mathcal{U}\otimes \mathcal{U}^*$
is always Hermitian, because
\[
(\mathcal{U} \otimes \mathcal{U}^*)_{i_1\cdots i_m j_1\cdots j_m}
=
(\mathcal{U})_{i_1\cdots i_m} (\mathcal{U}^*)_{j_1\cdots j_m}.
\]
The following is the spectral theorem for Hermitian tensors.

\begin{theorem}\label{Th:TensorSpectralTheorem}
For every Hermitian tensor $\mathcal{H}\in \bH {[n_1, \ldots, n_m]}$,
there exist nonzero real numbers
$\lambda_i \in \re$ and tensors
$\mathcal{U}_i\in \mathbb{C}^{n_1\times\cdots\times n_m}$ such that
\begin{equation} \label{Eq:tensormatrixdecom}
   \mathcal{H}=\sum_{i=1}^s \lambda_i \mathcal{U}_i \otimes \mathcal{U}_i^*,
   \quad \mbox{where} \quad
   \langle \mathcal{U}_i, \mathcal{U}_i\rangle =1, \,
   \quad \langle \mathcal{U}_i, \mathcal{U}_j \rangle =0 \, (i\not= j).
\end{equation}
\end{theorem}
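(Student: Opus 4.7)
The plan is to reduce the tensor statement to the classical spectral theorem for Hermitian matrices via the flattening map $\cH \mapsto H$ defined in (\ref{HIJ=Hij}).

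First, I would observe that the flattening map is an $\re$-linear bijection from $\bH[n_1,\ldots,n_m]$ onto the space of Hermitian matrices in $\cpx^{n\times n}$, where $n=n_1\cdots n_m$. Indeed, the Hermitian condition $\cH_{i_1\cdots i_m j_1\cdots j_m}=\cH^*_{j_1\cdots j_m i_1\cdots i_m}$ translates exactly to $(H)_{I,J}=(H)_{J,I}^*$, so $H$ is a genuine Hermitian matrix. Applying the matrix spectral theorem gives real numbers $\lambda_1,\ldots,\lambda_s$ (the nonzero eigenvalues of $H$) and orthonormal vectors $v_1,\ldots,v_s\in\cpx^n$ with
\begin{equation*}
H=\sum_{i=1}^s \lambda_i\, v_i v_i^*.
\end{equation*}

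Next, I would invert the flattening. For each $v_i\in\cpx^n$ indexed by $I=(i_1,\ldots,i_m)$, define a tensor $\mathcal{U}_i\in\cpx^{n_1\times\cdots\times n_m}$ by $(\mathcal{U}_i)_{i_1\cdots i_m}:=(v_i)_I$. I would then check the key identity that unflattening the rank-one matrix $v_i v_i^*$ produces exactly $\mathcal{U}_i\otimes\mathcal{U}_i^*$. This is immediate from
\begin{equation*}
(v_i v_i^*)_{I,J}=(v_i)_I (v_i)_J^*=(\mathcal{U}_i)_{i_1\cdots i_m}(\mathcal{U}_i^*)_{j_1\cdots j_m}=(\mathcal{U}_i\otimes\mathcal{U}_i^*)_{i_1\cdots i_m j_1\cdots j_m},
\end{equation*}
so by the $\re$-linearity of the flattening map the decomposition (\ref{Eq:tensormatrixdecom}) for $\cH$ follows from that of $H$.

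Finally, I would translate the orthonormality. The tensor inner product $\langle \mathcal{U}_i,\mathcal{U}_j\rangle$ defined analogously to (\ref{Eq:tensorinnerproduct}) evaluates to $\sum_I (\mathcal{U}_i)^*_I (\mathcal{U}_j)_I = v_i^* v_j$, so the matrix orthonormality $v_i^*v_j=\delta_{ij}$ delivers precisely $\langle\mathcal{U}_i,\mathcal{U}_i\rangle=1$ and $\langle\mathcal{U}_i,\mathcal{U}_j\rangle=0$ for $i\neq j$, as required.

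The proof is essentially bookkeeping once the flattening map is set up, so there is no serious obstacle; the only thing requiring care is making sure the two index conventions (the matrix indexing $(I,J)$ versus the tensor indexing $i_1\cdots i_m j_1\cdots j_m$) are used consistently, and that the complex conjugation falls on the second block of indices in both the matrix outer product and the tensor outer product. If one were more ambitious one could try to sharpen the statement and show that the number $s$ obtained equals the rank of the flattened matrix $H$, which would give a natural ``matrix rank'' of a Hermitian tensor to compare with $\rank_H(\cH)$, but this is not needed for the bare existence statement.
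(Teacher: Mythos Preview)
Your proposal is correct and follows essentially the same route as the paper: flatten $\cH$ to the Hermitian matrix $H$ via (\ref{HIJ=Hij}), apply the matrix spectral theorem to get $H=\sum_i \lambda_i q_i q_i^*$ with real $\lambda_i$ and orthonormal $q_i$, then fold each $q_i$ back into a tensor $\mathcal{U}_i$ and read off both the decomposition and the orthonormality $\langle\mathcal{U}_i,\mathcal{U}_j\rangle=q_i^*q_j$. The only difference is cosmetic notation ($v_i$ versus $q_i$) and that you spell out the entrywise check that $v_iv_i^*$ unflattens to $\mathcal{U}_i\otimes\mathcal{U}_i^*$ a bit more explicitly.
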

\begin{proof}
Let $H$ be the matrix labeled as in \reff{HIJ=Hij}.
By the definition, the tensor $\cH$ is Hermitian if and only if
the matrix $H$ is Hermitian, which is then equivalent to that
\[
H = \sum_{i=1}^s \lmd_i  q_i q_i^*
\]
for real scalars $\lmd_i$ and orthonormal vectors
$q_1, \ldots, q_s \in \cpx^N$.
We label vectors in $\cpx^N$ by $I=(i_1, \ldots, i_m)$.
Each $q_i \in \cpx^N$ can be folded into a tensor
$\mathcal{U}_i \in \cpx^{n_1 \times \cdots \times n_m}$
such that
\[
(q_i)_I = (\mathcal{U}_i)_{i_1 \ldots i_m}.
\]
The above decomposition for $H$ is then equivalent to
\[
\cH = \sum_{i=1}^s \lmd_i \mathcal{U}_i \otimes \mathcal{U}_i^*.
\]
Also note that
\[
\langle \mathcal{U}_i, \mathcal{U}_j \rangle
= \langle q_i, q_j \rangle = q_i^* q_j,
\]
which equals $1$ for $i=j$ and zero otherwise.
\end{proof}

From the proof, we can see that the real scalars
$\lmd_1, \cdots, \lmd_s$ in \reff{Eq:tensormatrixdecom}
are eigenvalues of the Hermitian matrix $H$.
We call them {\it matrix eigenvalues} of $\cH$.
The Hermitian eigenvalues are defined as in
\reff{Eq:HermitEiegnvalue01}-\reff{Eq:HermitEigenvalue02}.
The equation \reff{Eq:tensormatrixdecom} is called an
{\it eigen-matrix decomposition} of $\mathcal{H}$.

%
%

It is well known that there are both separable states and entangled states in mixed quantum states. Hence, even if Hermitian tensors are Hermitian decomposable, but not any Hermitian tensor has a positive Hermitian decomposition. Denote $\mathrm{PHD}{[n_1,\ldots,n_m]}$ as the set of all positive Hermitian decomposable tensors.
Recall the cone of nonnegative Hermitian tensors:
\[
{\rm NN\mathbb{H}} {[n_1,\ldots,n_m]} :=
\left\{\cH \in \bH {[n_1, \ldots, n_m]}: \,
 \cH(x) \geq 0, \
x=(x_1,\ldots, x_m ),\ \forall \, x_i \in \cpx^{n_i}
\right \}.
\]


\begin{theorem}
If $\cH$ is a positive Hermitian decomposable tensor,
then all the matrix eigenvalues and Hermitian eigenvalues
of $\cH$ are nonnegative.
\end{theorem}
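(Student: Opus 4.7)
The plan is to reduce both claims to showing that a positive Hermitian decomposable tensor lies in ${\rm NN\mathbb{H}}{[n_1,\ldots,n_m]}$, and then to quote the eigenvalue characterization from Section \ref{sc:nhe} for the Hermitian spectrum and use a standard PSD flattening argument for the matrix spectrum.

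First I would evaluate the conjugate polynomial $\cH(x)$ on a single rank-one Hermitian term. A direct expansion using \reff{Eq:tensorinnerproduct} and \reff{Eq:vectorproduct} factors the $2m$-fold sum into a product of $m$ independent scalar inner products, giving
\[
\langle u^{(1)}\otimes\cdots\otimes u^{(m)}\otimes u^{(1)*}\otimes\cdots\otimes u^{(m)*},\ \otimes_{i=1}^m x_i \otimes_{j=1}^m x_j^*\rangle
= \prod_{k=1}^m |\langle u^{(k)}, x_k\rangle|^2.
\]
By linearity of the inner product in its second argument, a positive Hermitian decomposition of $\cH$ with all coefficients $\lmd_i>0$ then yields
\[
\cH(x) = \sum_{i=1}^r \lmd_i \prod_{k=1}^m |\langle u_i^{(k)},x_k\rangle|^2 \ \ge\ 0
\]
for every $x=(x_1,\ldots,x_m)$, so $\cH \in {\rm NN\mathbb{H}}{[n_1,\ldots,n_m]}$. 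The observation at the end of Section \ref{sc:nhe} that a Hermitian tensor is nonnegative iff all its Hermitian eigenvalues are $\ge 0$ then immediately gives the first half of the theorem.

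For the matrix eigenvalues I would flatten the positive Hermitian decomposition through the labeling in \reff{HIJ=Hij}. Let $v_i \in \cpx^n$, with $n=n_1\cdots n_m$, denote the vectorization of $\mathcal{U}_i := u_i^{(1)}\otimes\cdots\otimes u_i^{(m)}$. Under this flattening each rank-one summand $\mathcal{U}_i\otimes \mathcal{U}_i^*$ becomes the rank-one PSD matrix $v_i v_i^*$, so the flattened Hermitian matrix of $\cH$ is
\[
H = \sum_{i=1}^r \lmd_i\, v_i v_i^*,
\]
which is positive semidefinite since every $\lmd_i>0$. Hence all its eigenvalues—the matrix eigenvalues of $\cH$—are nonnegative.

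There is no genuine obstacle in the argument; what requires care is only bookkeeping in the inner-product expansion (tracking which entries are conjugated so that the product factors cleanly into $|\langle u^{(k)},x_k\rangle|^2$) and confirming that the labeling \reff{HIJ=Hij} is compatible with standard Kronecker-product conventions so that $\mathcal{U}_i\otimes\mathcal{U}_i^*$ really does flatten to $v_i v_i^*$.
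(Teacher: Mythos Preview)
Your proposal is correct and follows essentially the same approach as the paper: expand $\cH(x)$ on the positive Hermitian decomposition to get a nonnegative sum of products $\lmd_i\prod_k|\langle u_i^{(k)},x_k\rangle|^2$, and flatten the decomposition via \reff{HIJ=Hij} to exhibit $H$ as a positive semidefinite matrix. The only cosmetic differences are that the paper absorbs the $\lmd_i$ into the vectors (writing the decomposition without coefficients) and treats the matrix eigenvalues first, and that you invoke the nonnegativity characterization from Section~\ref{sc:nhe} explicitly rather than phrasing it as ``critical values are nonnegative.''
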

\begin{proof}
When $\cH$ is positive Hermitian decomposable,
it has a decomposition as
\begin{equation}\label{Eq:separabletensor}
  \mathcal{H}=\sum_{i=1}^r    u_i^{(1)}\otimes \ldots \otimes u_i^{(m)} \otimes u_i^{(1)*}\otimes \ldots \otimes u_i^{(m)*}
\end{equation}

Its Hermitian flattening matrix $H$ takes the form
\[
H = \sum_{i=1}^r   \mathcal{U}_i \mathcal{U}_i^*,
\]
where each $\mathcal{U}_i$ is the vector corresponding to the tensor
$u_i^{(1)} \otimes \cdots \otimes u_i^{(m)}$.
Clearly, the matrix $H$ is positive semidefinite,
hence all the matrix eigenvalues are nonnegative.
Moreover, we also have
\[
\cH(x)=\langle \cH, \otimes_{i=1}^m x_i \otimes_{j=1}^m x_j^*  \rangle
= \sum_{i=1}^r |u_i^{(1)*} x_i|^2.
\]
It is alway nonnegative over the multi-sphere
$\| x_i \|=1$. So, the critical values of
$\cH(x) $
are all nonnegative, i.e.,
all the Hermitian eigenvalues are nonnegative.
\end{proof}

\begin{theorem}\label{Th:HP-TensMatrixDecompos}
	(Hughston-Jozsa-Wootters, 1993 \cite{hjw93}) Let $n=n_1\times n_2\times\ldots\times n_m$. Assume that $\mathcal{H}$ is a Hermitian tensor with a positive eigen-matrix decomposition
\begin{equation}\label{Eq:PEigenD2}
\cH = \sum_{i=1}^s \mathcal{U}_i\otimes \mathcal{U}_i^*,
\end{equation}
	and  a positive Hermitian decomposition
\begin{equation}\label{Eq:PD2}
\cH = \sum_{i=1}^r \mathcal{V}_i\otimes \mathcal{V}_i^*.
\end{equation}	
	Let $U=(\mathcal{U}_1, \mathcal{U}_2, \ldots, \mathcal{U}_s ) $ be an $n\times s$ matrix, and $V=(\mathcal{V}_1, \mathcal{V}_2, \ldots, \mathcal{V}_r ) $ be an $n\times r$ matrix, respectively. Then $r\geq s$ and there is an $s\times r$ matrix $Q$ satisfying $ Q Q^\dag = I_{s \times s} $, such that $V= U Q$, where $I_{s \times s} $ denotes the $s \times s$ unit matrix. Further more, if $r>s$, then $Q$ can be extended to an $ r\times r $ unitary matrix $P$, such that $ (U, 0) = V P^{-1},$ where $(U, 0)$ is an $n\times r$ matrix.
\end{theorem}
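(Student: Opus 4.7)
The plan is to reduce the statement to the matrix version of the Hughston-Jozsa-Wootters theorem via the flattening (\ref{HIJ=Hij}). Let $n = n_1 n_2 \cdots n_m$ and, for each tensor $\mathcal{U}_i, \mathcal{V}_i \in \mathbb{C}^{n_1 \times \cdots \times n_m}$, let $\vec{\mathcal{U}}_i, \vec{\mathcal{V}}_i \in \mathbb{C}^n$ denote the column vectors indexed as in (\ref{HIJ=Hij}). Identifying $U, V$ with the matrices $(\vec{\mathcal{U}}_1, \ldots, \vec{\mathcal{U}}_s)$ and $(\vec{\mathcal{V}}_1, \ldots, \vec{\mathcal{V}}_r)$, the two decompositions (\ref{Eq:PEigenD2}) and (\ref{Eq:PD2}) flatten to the single matrix identity $UU^* = H = VV^*$.

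Next I identify ranges. Because (\ref{Eq:PEigenD2}) comes from the spectral theorem, the columns $\vec{\mathcal{U}}_i$ are mutually orthogonal, and the hypothesis that the decomposition is \emph{positive} ensures each $\|\vec{\mathcal{U}}_i\|^2 > 0$. Hence $U^* U$ is an invertible positive diagonal matrix, $U$ has full column rank $s$, and the column space of $U$ coincides with the range of $H$. On the other hand, $\mathrm{Range}(H) = \mathrm{Range}(VV^*) = \mathrm{Range}(V)$, so the $r$ columns of $V$ span an $s$-dimensional subspace and therefore $r \geq s$. Expanding each $\vec{\mathcal{V}}_j$ in the basis given by the columns of $U$ produces an $s \times r$ matrix $Q$ with $V = UQ$. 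Substituting this into $UU^* = VV^*$ gives $U(I_s - QQ^*)U^* = 0$; multiplying on the left by $(U^*U)^{-1}U^*$ and on the right by $U(U^*U)^{-1}$ cancels the $U$'s and yields $QQ^* = I_{s\times s}$.

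For the extension when $r > s$, note that $QQ^* = I_s$ says the $s$ rows of $Q$ are orthonormal in $\mathbb{C}^r$. Gram-Schmidt extends them to a full orthonormal basis of $\mathbb{C}^r$; collecting the extra $r - s$ vectors as the rows of a matrix $Q'$ of size $(r - s) \times r$, the block matrix $P := \begin{pmatrix} Q \\ Q' \end{pmatrix}$ is $r \times r$ unitary, so $P^{-1} = P^* = (Q^*, (Q')^*)$. Moreover $PP^* = I_r$ forces $Q(Q')^* = 0$, hence block multiplication gives $VP^{-1} = (VQ^*, V(Q')^*) = (UQQ^*, UQ(Q')^*) = (U, 0)$, as claimed.

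The main obstacle is conceptual rather than technical: once one recognises that both decompositions are positive semidefinite factorisations of the \emph{same} Hermitian matrix $H$, a standard matrix argument does the rest. The single delicate point is ensuring $U^*U$ is invertible so that $Q$ is uniquely determined by $V = UQ$, which is precisely where the word \emph{positive} in the eigen-matrix decomposition is essential.
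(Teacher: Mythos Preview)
The paper does not supply its own proof of this theorem; it is quoted as the Hughston--Jozsa--Wootters result \cite{hjw93} and then used as a black box in the proof of Theorem~\ref{Th:HermitianTensorDecompositionC}. Your argument is correct and is exactly the standard HJW proof, carried out here via the flattening \reff{HIJ=Hij}: once the two tensor decompositions are rewritten as $UU^\dag = H = VV^\dag$, the range identification, the relation $V = UQ$, the cancellation yielding $QQ^\dag = I_s$, and the Gram--Schmidt extension to a unitary $P$ are all standard linear algebra. One cosmetic point: in this theorem the paper writes $\dag$ for conjugate transpose, whereas elsewhere $*$ denotes entrywise complex conjugation, so your $Q^*$, $U^*$, $V^*$ should be read as $Q^\dag$, $U^\dag$, $V^\dag$ to match the statement.
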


\begin{theorem}\label{Th:HermitianTensorDecompositionC}
	Assume that $\mathcal{H}$ is a Hermitian tensor with a positive decomposition (\ref{Eq:tensormatrixdecom}) and a positive Hermitian decomposition (\ref{Eq:rank-oneHermitDecomp}) with $p_i$ and $\lambda_j$ are positive for $i=1, \cdots, r$,  $j=1, \cdots, s$. Let $x_{ij}=\langle \mathcal{U}_j, u_i^{(1)}\cdots u_i^{(m)} \rangle$, $Q_{ij}=\sqrt{p_i/\lambda_j} x_{ij}$, for $i=1, \cdots, r$,  $j=1, \cdots, s$. Then
	$r\geq s$, $Q^\dag Q=I_{s\times s}$, and
	$$
	\left(   \begin{array}{ccc}
	\sqrt{p_1} u_1^{(1)}\cdots u_1^{(m)}\\ \cdots \\ \sqrt{p_r} u_r^{(1)}\cdots u_r^{(m)}\\ \end{array}  \right)
	= Q  \left(   \begin{array}{ccc}
	\sqrt{\lambda_1}\mathcal{U}_1\\ \cdots \\ \sqrt{\lambda_s}\mathcal{U}_s\\
	\end{array} \right), \
	\left(
	\begin{array}{ccc}
	\sqrt{\lambda_1}\mathcal{U}_1\\ \cdots \\ \sqrt{\lambda_s}\mathcal{U}_s\\
	\end{array}
	\right)
	=Q^\dag \left(
	\begin{array}{ccc}
	\sqrt{p_1} u_1^{(1)}\cdots u_1^{(m)}\\ \cdots \\ \sqrt{p_r} u_r^{(1)}\cdots u_r^{(m)}\\
	\end{array}
	\right).
	$$
	
\end{theorem}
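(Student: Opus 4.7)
The plan is to reduce the statement to a matrix identity by flattening both decompositions to vectors in $\cpx^n$ with $n=n_1\cdots n_m$, using the labeling \reff{HIJ=Hij}. Write $\tilde u_i := u_i^{(1)}\otimes\cdots\otimes u_i^{(m)}$, and view $\tilde u_i$ and $\mathcal{U}_j$ as column vectors in $\cpx^n$. The Hermitian flattening $H$ of $\mathcal{H}$ then admits the two positive semidefinite factorizations
$$
H \;=\; \sum_{i=1}^r p_i\,\tilde u_i\tilde u_i^* \;=\; \sum_{j=1}^s \lambda_j\,\mathcal{U}_j\mathcal{U}_j^*.
$$
Because $\{\mathcal{U}_1,\ldots,\mathcal{U}_s\}$ is orthonormal and each $\lambda_j>0$, the second expression is the spectral decomposition of $H$, so $\mathrm{rank}(H)=s$ and $\mathrm{range}(H)=\mathrm{span}\{\mathcal{U}_1,\ldots,\mathcal{U}_s\}$. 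Each summand $p_i\tilde u_i\tilde u_i^*$ is PSD with range contained in $\mathrm{range}(H)$, forcing every $\tilde u_i$ to lie in $\mathrm{span}\{\mathcal{U}_j\}_{j=1}^s$; since $H$ is written as a sum of $r$ rank-one PSD matrices, $r\geq s$. This also permits the expansion $\tilde u_i=\sum_{j=1}^s x_{ij}\mathcal{U}_j$ with $x_{ij}=\langle\mathcal{U}_j,\tilde u_i\rangle$.

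Multiplying this expansion by $\sqrt{p_i}$ and grouping the factor $\sqrt{\lambda_j}$ with $\mathcal{U}_j$ gives
$$
\sqrt{p_i}\,\tilde u_i \;=\; \sum_{j=1}^s \sqrt{p_i/\lambda_j}\,x_{ij}\cdot\sqrt{\lambda_j}\,\mathcal{U}_j \;=\; \sum_{j=1}^s Q_{ij}\sqrt{\lambda_j}\,\mathcal{U}_j,
$$
which is exactly the first matrix identity of the theorem. For the orthonormality relation $Q^\dag Q=I_{s\times s}$, I will compute
$$
(Q^\dag Q)_{j_1 j_2} \;=\; \sum_{i=1}^r Q_{ij_1}^{*}Q_{ij_2} \;=\; \frac{1}{\sqrt{\lambda_{j_1}\lambda_{j_2}}}\sum_{i=1}^r p_i\,x_{ij_1}^{*}x_{ij_2} \;=\; \frac{1}{\sqrt{\lambda_{j_1}\lambda_{j_2}}}\,\mathcal{U}_{j_2}^{*}H\,\mathcal{U}_{j_1},
$$
where the last step recognizes $\sum_i p_i\tilde u_i\tilde u_i^*=H$. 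Since $H\mathcal{U}_{j_1}=\lambda_{j_1}\mathcal{U}_{j_1}$ and $\mathcal{U}_{j_2}^*\mathcal{U}_{j_1}=\delta_{j_1 j_2}$, this simplifies to $\delta_{j_1 j_2}$. The second matrix identity is then obtained by left-multiplying the first identity (viewed as a column of vectors) by $Q^\dag$ and cancelling $Q^\dag Q=I_s$.

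The main obstacle is not conceptual but indexing: one must keep the roles of $i\in[r]$ and $j\in[s]$, and the complex conjugates on $x_{ij}$, consistent when passing between $Q$ and $Q^\dag$. Conceptually the existence of \emph{some} intertwining matrix is furnished by the Hughston--Jozsa--Wootters theorem (Theorem \ref{Th:HP-TensMatrixDecompos}) applied to $H$; the only genuinely new content beyond HJW is identifying that intertwining matrix explicitly as $Q_{ij}=\sqrt{p_i/\lambda_j}\,\langle\mathcal{U}_j,u_i^{(1)}\cdots u_i^{(m)}\rangle$, which the above computation supplies.
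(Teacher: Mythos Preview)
Your proof is correct and follows essentially the same route as the paper: both flatten the two decompositions to the matrix setting and rely on the Hughston--Jozsa--Wootters framework (Theorem~\ref{Th:HP-TensMatrixDecompos}). The paper simply cites HJW and identifies the resulting intertwining matrix with the given $Q$, whereas you unfold that argument explicitly---deriving the range containment, the rank bound $r\geq s$, and the identity $Q^\dag Q=I_s$ by a direct spectral computation---but the underlying strategy is the same.
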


{\bf Proof.} Since $x_{ij}=\langle \mathcal{U}_j, u_i^{(1)}\cdots u_i^{(m)} \rangle$ and $Q_{ij}=\sqrt{p_i/\lambda_j} x_{ij}$ for $i=1, \cdots, r$,  $j=1, \cdots, s$, then $ Q=  (\mathrm{diag} (\frac{1}{|\lambda_1|}, \cdots, \frac{1}{|\lambda_s|}) U^\dag V)^T $. By Theorem \ref{Th:HP-TensMatrixDecompos}, it follows that $r\geq s$, $Q^\dag Q = I_{s\times s} $, $V^T= Q U^T $ and $ U^T  =Q^\dag V^T.$ Hence, these results are followed. $\Box$

From Theorem \ref{Th:HP-TensMatrixDecompos} and Theorem \ref{Th:HermitianTensorDecompositionC}, we know that if $\mathcal{H}$ is a Hermitian tensor with a positive eigen-matrix decomposition (\ref{Eq:PEigenD2}) and a positive Hermitian decomposition (\ref{Eq:PD2}), then Span($\mathcal{U}_1, \cdots, \mathcal{U}_s$)=Span($\mathcal{V}_1, \cdots, \mathcal{V}_r$), and there is a matrix $Q_{s\times r}$ such that $ Q Q^\dag = I_{s\times s} $ and $V=UQ$. Hence, by this method, one can find 
a positive Hermitian decomposition of $\mathcal{H}$, or determine that $\mathcal{H}$ is not positive Hermitian decomposable.

\section{Application to quantum mixed state}\label{sec:mixedstates}
Let $\rho$ be an $m$-partite mixed state.  Let $\{ |e_i^{(k)}\rangle|i=1, \cdots, n_k  \} $ is an orthonormal basis of the $k$-th system for all $k\in [m]$ and $\mathcal{H}\in \mathbb{H}[n_1, \cdots, n_m]$ is the corresponding Hermitian tensor of $\rho$.
Assume that $\{ |f_i^{(k)}\rangle|i=1, \cdots, n_k  \} $ is another orthonormal basis of the $k$-th system for all $k\in [m]$ and $\mathcal{T}$ is the corresponding Hermitian tensor of $\rho$ under the orthonormal basis. Then, $\mathcal{H}$ is unitary similar to $\mathcal{T}$, and the state $\rho$ is separable if and only if $\mathcal{H}$ is positive Hermitian decomposition. From the above sections discussion, we have the following properties of mixed states.

\begin{theorem}\label{Th:mixedstates}
Assume that $\rho$ is a quantum mixed state and $\mathcal{H}\in \mathbb{H}[n_1, \cdots, n_m]$ is the corresponding Hermitian tensor of $\rho$ under an orthonormal basis. The following results are true.

(1) If the smallest matrix eigenvalue is negative, then the state $\rho$ is entangled.

(2) If the smallest Hermitian eigenvalue is negative, then the state $\rho$ is entangled.

(3) Assume that $\mathcal{H}$ has a positive eigen-matrix decomposition (\ref{Eq:PEigenD2}). Let $$K=Span\{ \mathcal{U}_1, \mathcal{U}_2, \cdots, \mathcal{U}_s \}.$$ If $\mathcal{H}$ has a positive Hermitian decomposition (\ref{Eq:rank-oneHermitDecomp}), then $u_i^{(1)}\otimes \ldots \otimes u_i^{(m)}\in K $ for all $i\in [r]$.
	
\end{theorem}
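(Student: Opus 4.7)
The plan is to reduce each of the three claims to a result already established earlier in the paper, together with the characterization stated immediately above the theorem: $\rho$ is separable if and only if its associated Hermitian tensor $\cH$ is positive Hermitian decomposable. Equivalently, $\rho$ is entangled if and only if $\cH$ admits no positive Hermitian decomposition. This reformulation converts quantum-mechanical hypotheses on $\rho$ into structural hypotheses on $\cH$, which the machinery of Sections 4 and 5 is designed to attack.

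For parts (1) and (2) I would argue by contrapositive. Suppose $\rho$ were separable. Then $\cH$ admits a positive Hermitian decomposition, and the earlier theorem of Section 5 (``if $\cH$ is positive Hermitian decomposable then all matrix eigenvalues and all Hermitian eigenvalues of $\cH$ are nonnegative'') forces \emph{every} matrix eigenvalue and \emph{every} Hermitian eigenvalue of $\cH$ to be $\ge 0$. In particular the smallest matrix eigenvalue is $\ge 0$, contradicting the hypothesis of (1), and the smallest Hermitian eigenvalue is $\ge 0$, contradicting the hypothesis of (2). Hence in either case $\rho$ must be entangled.

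For part (3) I would invoke Theorem \ref{Th:HermitianTensorDecompositionC} applied to the given positive eigen-matrix decomposition (\ref{Eq:PEigenD2}) and the given positive Hermitian decomposition (\ref{Eq:rank-oneHermitDecomp}), taking the strictly positive coefficients $\lambda_i$ of the Hermitian decomposition in the role of the $p_i$ in that theorem (and writing $\mu_j$ for the positive coefficients of the eigen-matrix decomposition). The theorem yields an $r \times s$ matrix $Q$ with $Q^{\dagger} Q = I_{s\times s}$ such that, for every $i \in [r]$,
\[
\sqrt{\lambda_i}\, u_i^{(1)} \otimes \cdots \otimes u_i^{(m)}
\;=\; \sum_{j=1}^{s} Q_{ij}\, \sqrt{\mu_j}\, \mathcal{U}_j.
\]
Since $\lambda_i > 0$, dividing by $\sqrt{\lambda_i}$ shows that each product tensor $u_i^{(1)} \otimes \cdots \otimes u_i^{(m)}$ is a $\mathbb{C}$-linear combination of $\mathcal{U}_1, \ldots, \mathcal{U}_s$, so it lies in $K = \mathrm{Span}\{\mathcal{U}_1,\ldots,\mathcal{U}_s\}$, as desired.

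There is no genuine obstacle in this proof; the theorem is a direct synthesis of the separability criterion, the nonnegativity theorem of Section 5, and Theorem \ref{Th:HermitianTensorDecompositionC}. The only point that demands care is bookkeeping: the indexing conventions of Theorem \ref{Th:HermitianTensorDecompositionC} (where $p_i$ index the Hermitian decomposition and $\lambda_j$ index the eigen-matrix decomposition) must be matched against the hypotheses of the present theorem, renaming symbols as above to avoid a clash of notation with the $\lambda_i$ appearing in (\ref{Eq:rank-oneHermitDecomp}).
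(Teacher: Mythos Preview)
Your proposal is correct and matches the paper's approach: the paper gives no explicit proof of this theorem, stating only that it follows ``from the above sections discussion,'' and your argument unpacks precisely that discussion---the separability criterion combined with the nonnegativity theorem of Section~5 for parts (1) and (2), and Theorem~\ref{Th:HermitianTensorDecompositionC} (equivalently the span equality noted after it) for part (3). Your bookkeeping remark about the coefficient labels is apt, since the paper itself silently shifts from $\lambda_i$ to $p_i$ in Theorem~\ref{Th:HermitianTensorDecompositionC}.
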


\begin{example}
Let $|\psi_1\rangle = (|00\rangle +|01\rangle +\sqrt{-1} |11\rangle)/\sqrt{3}$, $|\psi_2\rangle = (|00\rangle -|01\rangle +4\sqrt{-1} |10\rangle)/(3\sqrt{2})$, $\rho =\rho_1 |\psi_1\rangle \langle \psi_1|+ \rho_2 |\psi_2\rangle \langle \psi_2| $, where $\rho_1>0, \rho_2>0$ and $\rho_1+\rho_2=1$. Next, let's discuss whether the state $\rho$ is separable or entangled. We will take four steps to deal with the problem.

Step 1: Let
\begin{equation}\label{ex:mixedPHD}
\mathcal{U}_1=\frac{1}{\sqrt{3}}\left(
                \begin{array}{cc}
                  1 & 1 \\
                  0 & \sqrt{-1} \\
                \end{array}
              \right),
\mathcal{U}_2=\frac{1}{3\sqrt{2}}\left(
                \begin{array}{cc}
                  1 & -1 \\
                  4\sqrt{-1} & 0 \\
                \end{array}
              \right),
\mathcal{H}=\rho_1 \mathcal{U}_1\otimes \mathcal{U}_1^*+\rho_2 \mathcal{U}_2\otimes \mathcal{U}_2^*.
\end{equation}
Then $\mathcal{U}_1 $, $\mathcal{U}_2 $ and $\mathcal{H} $ are the corresponding tensors of $|\psi_1\rangle$, $|\psi_2\rangle$ and $\rho$ under the orthonormal basis $\{|0\rangle,|1\rangle \}$, respectively. Science $\langle \mathcal{U}_i,\mathcal{U}_j\rangle = \delta(i,j)$, $i,j=1,2$, hence (\ref{ex:mixedPHD}) is an eigen-matrix decomposition of $\mathcal{H}$.

Step 2: Let
$$
K=\{k_1 \mathcal{U}_1+k_2 \mathcal{U}_2 | k_1,k_2\in \mathbb{C}\}=\left\{\left(
                \begin{array}{cc}
                  \frac{k_1}{\sqrt{3}}+\frac{k_2}{3\sqrt{2}} & \frac{k_1}{\sqrt{3}}-\frac{k_2}{3\sqrt{2}} \\
                  \frac{4k_2\sqrt{-1}}{3\sqrt{2}} & \frac{k_1\sqrt{-1}}{\sqrt{3}} \\
                \end{array}
              \right) | k_1,k_2\in \mathbb{C}\right\}.
$$
Obviously, a matrix in $K$ is a rank-one matrix if and only if $k_2 =\frac{3 \sqrt{6}  - \sqrt{-42} }{8}k_1$ or $k_2 = \frac{3 \sqrt{6}  + \sqrt{-42} }{8}k_1$.

If $k_2 =\frac{3 \sqrt{6}  - \sqrt{-42} }{8}k_1$, then
$$
k_1 \mathcal{U}_1+k_2 \mathcal{U}_2=\frac{k_1}{\sqrt{3}}\left(
                                      \begin{array}{cc}
                                        (11 -  \sqrt{-7})/8  & (5 +  \sqrt{-7})/8  \\
                                        (3 - \sqrt{-7})/2 & 1 \\
                                      \end{array}
                                    \right)
$$
$$
=\frac{k_1}{\sqrt{3}} \left(
   \begin{array}{c}
     (5+\sqrt{-7})/8 \\
     1 \\
   \end{array}
 \right)\otimes
 \left(
   \begin{array}{c}
     (3-\sqrt{-7})/2 \\
     1 \\
   \end{array}
 \right).
$$

If $k_2 =\frac{3 \sqrt{6}  + \sqrt{-42} }{8}k_1$, then
$$
k_1 \mathcal{U}_1+k_2 \mathcal{U}_2=\frac{k_1}{\sqrt{3}}\left(
                                      \begin{array}{cc}
                                        (11 +  \sqrt{-7})/8  & (5 -  \sqrt{-7})/8  \\
                                        (3 + \sqrt{-7})/2 & 1 \\
                                      \end{array}
                                    \right)
$$
$$
=\frac{k_1}{\sqrt{3}} \left(
   \begin{array}{c}
     (5-\sqrt{-7})/8 \\
     1 \\
   \end{array}
 \right)\otimes
 \left(
   \begin{array}{c}
     (3+\sqrt{-7})/2 \\
     1 \\
   \end{array}
 \right).
$$

Step 3:
Assume that $\mathcal{H}$ has a positive Hermitian decomposition
	\begin{equation}\label{EQ:ex6PHD}
	\mathcal{H} =\sum_{k=1}^r p_k \mathcal{A}_k \otimes\mathcal{A}_k^*.
	\end{equation}
Then $r=2$, and one may take
$$
\mathcal{A}_1=\left(
                                      \begin{array}{cc}
                                        (11 -  \sqrt{-7})/8  & (5 +  \sqrt{-7})/8  \\
                                        (3 - \sqrt{-7})/2 & 1 \\
                                      \end{array}
                                    \right),
\mathcal{A}_2=\left(
                                      \begin{array}{cc}
                                        (11 +  \sqrt{-7})/8  & (5 -  \sqrt{-7})/8  \\
                                        (3 + \sqrt{-7})/2 & 1 \\
                                      \end{array}
                                    \right).
$$

Step 4:
Compute partial entries of $\mathcal{H}$ in (\ref{ex:mixedPHD}) and (\ref{EQ:ex6PHD}) as the following table, respectively.

\begin{center}\begin{tabular}{|c|c|c|c|c|}
  \hline
  $\mathcal{H}$                      & $\mathcal{H}_{1111}$ & $\mathcal{H}_{1212}$ & $\mathcal{H}_{2121}$ & $\mathcal{H}_{2222}$ \\ \hline 
  $\mathcal{H}$ of (\ref{ex:mixedPHD}) & $(6 \rho_1 + \rho_2)/18$ & $(6 \rho_1 + \rho_2)/18$ & $(8 \rho_2)/9$ & $\rho_1/3$ \\
  $\mathcal{H}$ of (\ref{EQ:ex6PHD}) & $2 (p_1 + p_2)$ & $(p_1 + p_2)/2$ & $4 (p_1 + p_2)$ & $p_1 + p_2$ \\
  \hline
\end{tabular}\end{center}
Comparing entries of two $\mathcal{H}$, we find that there are no $p_1$ and $p_2$ such that two $\mathcal{H}$ are the same tensor. Hence, $\mathcal{H}$ is not positive Hermitian decomposable. It follows that $\rho$ is entangled. \ $\Box$

\end{example}

\section{Conclusion}
Hermitian tensor can be seen as an extension of Hermitian matrix to higher order. This paper introduces the concepts of Hermitian tensors, partial traces, rank-one Hermitian decomposition, Hermitian tensor eigenvalues and positive Hermitian tensors, etc, and gives their basic properties. All these concepts are useful in quantum physics. A fundamental problem in quantum physics and also an important problem in quantum information science is to detect whether a given state is separable or entangled, and if so, how entangled it is. Hence, based on the consideration of studying on entanglement of quantum mixed states, there are many aspects that need to be studied in the future, including: (1) discrimination of positive Hermitian tensors and decomposition algorithms; (2) numerical methods of calculating quantum entanglement value; (3) properties of partial traces of Hermitian tensors.

\vskip 1 in


\begin{thebibliography}{1}
\bibitem{QCC2018}
L. Qi, H. Chen, Y. Chen, Tensor eigenvalues and their applications, Springer Nature Singapore Pte Ltd. 2018.

\bibitem{QL2017}
L. Qi and Z. Luo, Tensor analysis: Spectral theory and special tensors, SIAM, 2017.

\bibitem{Comon2008}
Pierre Comon, Gene Golub, Lek-Heng Lim, and Bernard Mourrain.
\newblock Symmetric tensors and symmetric tensor rank.
\newblock {\em SIAM Journal on Matrix Analysis and Applications},
  30(3):1254--1279, 2008.

\bibitem{qi05}
Qi, L.: Eigenvalues of a real supersymmetric tensor. J. Symb. Comput. {\bf 40}, 1302-1324 (2005)

\bibitem{L05}
Lim, L.H.: Singular values and eigenvalues of tensors: A variation approach. proceedings of the IEEE International Workshop on Computational Advances in Multi-Sensor Adaptive Processing (CAMSAP '05) {\bf 1}, 129-132 (2005)

\bibitem{FanNZh2018}
Jinyan Fan, Jiawang Nie, Anwa Zhou, Tensor eigenvalue complementarity problems, Mathematical Programming (2018)Volume 170, Issue 2, pp 507-539

\bibitem{ZLQ12}
Zhang, X., Ling, C., Qi, L.: The best rank-1 approximation of a symmetric tensor and related spherical optimization problems. SIAM J.Matrix Anal. Appl. {\bf 33}, 806-821 (2012)

\bibitem{NieW2014}
Nie, J., Wang, L.: Semidefinite relaxations for best rank-1 tensor approximations. SIAM J. Matrix Anal. $\&$ Appl. {\bf 35}, 1155-1179 (2014)

\bibitem{CCW17}
Che, M., Cichockib, A., Wei, Y.:Neural networks for computing best rank-one approximations of tensors and its applications. Neurocomputing, {\bf 267}, 114-133 (2017)

\bibitem{landsberg2010}
Joseph~M Landsberg and Zach Teitler.
\newblock On the ranks and border ranks of symmetric tensors.
\newblock {\em Foundations of Computational Mathematics}, 10(3):339--366, 2010.

\bibitem{nie2017low}
Jiawang Nie.
\newblock Low rank symmetric tensor approximations.
\newblock {\em arXiv preprint arXiv:1709.01964}, 2017.

\bibitem{Brachat2010}
Jerome Brachat, Pierre Comon, Bernard Mourrain, and Elias Tsigaridas.
\newblock Symmetric tensor decomposition.
\newblock {\em Linear Algebra and its Applications}, 433(11-12):1851--1872,
  2010.

\bibitem{nie2017generating}
Jiawang Nie.
\newblock Generating polynomials and symmetric tensor decompositions.
\newblock {\em Foundations of Computational Mathematics}, 17(2):423--465, 2017.

\bibitem{Chang2009}
Kung-Ching Chang, Kelly Pearson, and Tan Zhang.
\newblock On eigenvalue problems of real symmetric tensors.
\newblock {\em Journal of Mathematical Analysis and Applications},
  350(1):416--422, 2009.

\bibitem{LuoQY2015}
Luo Ziyan, Qi Liqun, Ye Yinyu,
Linear operators and positive semidefiniteness of symmetric tensor spaces,
{\it Science China Mathematics}, 58(1), 197-212 (2015)

\bibitem{HuHQ2014}
Hu Shenglong, Huang Zhenghai, Qi Liqun,
Strictly nonnegative tensors and nonnegative tensor partition,
{\it Science China Mathematics}, 57(1), 181-195 (2014)

\bibitem{QiL2013}
Qi, L.: Symmetric nonnegative tensors and copositive tensors. {\it Linear Algebra Its Appl.}, 439, 228-238 (2013)

\bibitem{QiXX2014}
Qi, L., Xu, C., Xu, Y.: Nonnegative tensor factorization, completely positive tensors and an hierarchical elimination algorithm. SIAM J. Matrix Anal. Appl. 35, 1227-1241 (2014)

\bibitem{FanZh2017}
Jinyan Fan, Anwa Zhou, A semidefinite algorithm for completely positive tensor decomposition,
Computational Optimization and Applications, (2017)Volume 66, Issue 2, pp 267-283

\bibitem{ZhouFan2018}
Anwa Zhou, Jinyan Fan, Completely positive tensor recovery with minimal nuclear value, Computational Optimization and Applications, (2018)Volume 70, Issue 2, pp 419-441

\bibitem{QWW09}
Qi, L., Wang, F., Wang, Y.: Z-eigevalue methods for a global polynomial optimization problem. Math. Program. {\bf 118}, 301-316 (2009)

\bibitem{kdm11}
Kolda, T.G., Mayo, J.R.: Shifted power method for computing tensor eigenpairs. SIAM J. Matrix Anal. $\&$ Appl. {\bf32}, 1095-1124(2011)

\bibitem{CuiDN2014}
Cui, C., Dai, Y., Nie, J.: All real eigenvalues of symmetric tensors. SIAM J. Matrix Anal. $\&$ Appl. {\bf 35}, 1582-1601 (2014)

\bibitem{HCD15}
Hao, C., Cui, C., Dai, Y.: A sequential subspace projection method for extreme Z-eigenvalues of supersymmetric tensors. Numer. Linear Algebra Appl. {\bf22}, 283-298 (2015)

\bibitem{CHL2016}
Chen, L., Han, L., Zhou, L.: Computing tensor eigenvalues via homotopy methods. SIAM J. Matrix Anal. $\&$ Appl. {\bf 37}, 290-319 (2016)

\bibitem{YYXSZ16}
Yu, G., Yu, Z., Xu, Y., Song, Y., Zhou, Y.:  An adaptive gradient method for computing generalized tensor eigenpairs. Comput. Optim. Appl. {\bf65}, 781-797 (2016)

\bibitem{LiNg2015}
Li, X., Ng, M.: Solving sparse non-negative tensor equations: algorithms and applications.
Front. Math. China 10, 649-680 (2015)

\bibitem{DingWei2016}
Ding, W., Wei, Y.: Solving multilinear systems with M-tensors. J. Sci. Comput. 68, 689-715
(2016)

\bibitem{HanL2017}
Han, L.: A homotopy method for solving multilinear systems with M-tensors. Appl. Math.
Lett. 69, 49-54 (2017)

\bibitem{LiXX2017}
Li, D., Xie, L., Xu, R.: Splitting methods for tensor equations. Numer. Linear Algebra Appl.
24(5), e2102 (2017)

\bibitem{XieJW2018}
Xie, Z.J., Jin, X.Q., Wei, Y.M.: Tensor methods for solving symmetric M-tensor systems. J.
Sci. Comput. 74(1), 412-425 (2018)

\bibitem{DuQZC2018}
Du Shouqiang, Qi Liqun, Zhang Liping, Chen Chiyu,
Tensor absolute value equations,
Science China Mathematics, 61(9), 1695 (2018)

\bibitem{LingYHQ2019}
Ling Chen, Yan Weijie, He Hongjin, Qi Liqun,
Further study on tensor absolute value equations,
Science China Mathematics, 10.1007/s11425-018-9560-3

\bibitem{FJL18}
Fu, T., Jiang, B., Li, Z.: On decompositions and approximations of conjugate
partial-symmetric complex tensors. arXiv:1802.09013(2018)

\bibitem{Horn1990}
Roger~A Horn and Charles~R Johnson.
\newblock {\em Matrix analysis}.
\newblock Cambridge university press, 1990.

\bibitem{JLZ2016}
B. Jiang, Z. Li, and S. Zhang, Characterizing real-valued multivariate complex polynomials and their symmetric tensor representations, SIAM J. Matrix Anal. Appl.  {\bf 37}, 381-408 (2016).

\bibitem{NQB14}
Ni, G., Qi, L., Bai, M.: Geometric measure of entanglement and U-eigenvalues of tensors. SIAM J. Matrix Anal. Appl. {\bf 35}, 73-87 (2014)

\bibitem{HNZ16}
Hua, B., Ni, G., Zhang, M.: Computing geometric measure of entanglement for symmetric pure states via the Jacobian SDP relaxation technique. J. Oper. Res. Soc. China {\bf 5}, 111-121 (2016)

\bibitem{QZN2017}
Qi, L., Zhang, G., Ni, G.: How entangled can a multi-party system possibly be? Phys. Lett. A {\bf 382}, 1465-1741 (2018)

\bibitem{NB16}
Ni, G., Bai, M.: Spherical optimization with complex variables for computing US-eigenpairs. Comput. Optim. Appl. {\bf 65},  799-820 (2016)

\bibitem{CQW17}
Che, M.,  Qi, L., Wei, Y.: Iterative algorithms for computing US- and U-eigenpairs of complex tensors. J. Comput. Appl. Math.  {\bf317}, 547-564(2017).

\bibitem{NZZ2017}
Zhang, M., Zhang, X., Ni, G.: Calculating entanglement eigenvalues for non-symmetric quantum pure states based on the Jacobian SDP relaxation method. J. Optim. Theory Appl. {\bf 180}, 787-802 (2019)

\bibitem{CQWZ18}
Che, M., Qi, L., Wei, Y., Zhang, G.: Geometric measures of entanglement in multipartite pure states via complex-valued neural networks. Neurocomputing, {\bf 313}, 25-38(2018)

\bibitem{hqz12}
Hu, S., Qi, L., Zhang, G.: The geometric measure of entanglement of pure states with nonnegative
amplitudes and the spectral theory of nonnegative tensors. Phys. Rev. A {\bf93}, 012304 (2016)

\bibitem{ZhFanW2018}
Anwa Zhou, Jinyan Fan, and Qingwen Wang,
Completely positive tensors in complex field,
Science China Mathematics, 10.1007/s11425-017-9437-2

\bibitem{hjw93}
L.P. Hughston, R. Jozsa, W.K. Wootters, A complete classification of quantum ensembles having a given density matrix, Phys. Lett. A, 183(1993) 14-18


















%
%
%
%
%
%
%
%
%
%
%
%
%
%
%
%
%
%






\end{thebibliography}
 \end{document}